\newcommand{\eas}{\textsf{EAS}}
\newcommand{\weas}{\textsf{WEAS}}
\newcommand{\irange}[2]{\ensuremath{\lrb{#1, #2}}}
\title{Interweaving Real-Time Jobs with Energy Harvesting to \\ Maximize Throughput}
\author{Baruch Schieber\thanks{Department of Computer Science, New Jersey Institute of Technology. Emails: \email{\{sbar,bs567,sv96\}@njit.edu}}
\and Bhargav Samineni\footnotemark[1] 
\and Soroush Vahidi\footnotemark[1] 
}
\date{}
\begin{document}

\maketitle

\begin{abstract}
  Motivated by baterryless IoT devices, we consider the following scheduling problem. The input includes $n$ unit time jobs $\calJ = \lrc{J_1, \ldots, J_n}$, where each job $J_i$ has a release time $r_i$, due date $d_i$, energy requirement $e_i$, and weight $w_i$.
  We consider time to be slotted; hence, all time related job values refer to slots. Let $T=\max_i\lrc{d_i}$. The input also includes an $h_t$ value for every time slot $t$ $\lrp{1 \leq t \leq T}$, which is the energy harvestable on that slot.
  Energy is harvested at time slots when no job is executed.
  The objective is to find a feasible schedule that maximizes the weight of the scheduled jobs.
  A schedule is feasible if for every job $J_j$ in the schedule and its corresponding slot $t_j$, $t_{j} \neq t_{j'}$ if ${j} \neq {j'}$, $r_j \leq t_j \leq d_j$, and the available energy before $t_j$ is at least $e_j$.
  To the best of our knowledge, we are the first to consider the theoretical aspects of this problem.

  In this work we show the following. \textsf{(1)} A polynomial time algorithm when all jobs have identical $r_i, d_i$ and $w_i$.
  \textsf{(2)} A $\frac{1}{2}$-approximation algorithm when all jobs have identical $w_i$ but arbitrary $r_i$ and $d_i$. \textsf{(3)} An FPTAS when all jobs have identical $r_i$ and $d_i$ but arbitrary $w_i$.
  \textsf{(4)} Reductions showing that all the variants of the problem in which at least one of the attributes $r_i$, $d_i$, or $w_i$ are not identical for all jobs are $\NPH$.

\end{abstract}

\section{Introduction}
\label{sec:intro}
The energy aware scheduling problem considered in this paper is defined as follows. Its input includes $n$ jobs $\calJ = \lrc{J_1, \ldots, J_n}$, where each job $J_i$ is associated with release time $r_i$, due date $d_i$, energy requirement $e_i$, and weight $w_i$. All jobs have equal (unit) processing time.
We consider time to be slotted with unit length and all time related job values to refer to slots. Let $T=\max_i\lrc{d_i}$. The input also includes an $h_t$ value specified explicitly for every time slot $t$, $\lrp{1 \leq t \leq T}$, which is the energy harvestable on the slot. Energy can be harvested \emph{only} at time slots where no job is being executed.
The energy available immediately before slot $t$ is the energy harvested in slots $1, \ldots, t-1$ minus the energy consumed by the jobs scheduled in slots $1, \ldots, t-1$. A schedule is \emph{feasible}
if it schedules no more than one job at a time, all executed jobs are scheduled between their release time and due date, and the energy available immediately before a job is executed is at least its energy requirement.
The objective is to maximize throughput, namely, the weight of the jobs in a feasible schedule.

\subsection{Motivation}
Our problem is motivated by the proliferation of Internet of Things (IoT) devices, which are used for many applications such as sensor networks, control systems, and home and building automation, to name a few.
One of the major challenges impacting the deployment of these devices is their power source. Most are powered by batteries, which are compact and lightweight options. However, the chemicals contained in these batteries pose a considerable risk to our environment \cite{IN2020}.
Also, the scarcity of the materials needed for batteries makes them prohibitively expensive in some applications, especially on an industrial scale.
Battery maintenance is another major issue, as the limited lifetime of batteries requires expensive and constant care to replace or recharge them. Consider, for example, IoT devices with humidity and temperature sensors that are stationed along an oil pipeline in a remote area and connected to its SCADA system \cite{B2010}.
Or, consider IoT devices for wildlife monitoring that are attached to animals and used to gather information like migration paths and population mortality \cite{BSW2018}.
Connecting these devices to the electric grid is not an option in some places, and using batteries is a logistical nightmare.

To overcome these issues, batteryless IoT devices have been proposed \cite{SK2010,M2016}. The energy used by these devices is directly harvested from environmental and renewable sources such as solar, wind, and radio-frequency (RF).
In the simplest designs, the harvested energy output is connected directly to the load. However, this design is only appropriate when the harvested current and voltage match those required for a task's execution \cite{LBC2017}.
More sophisticated designs include either capacitors or super capacitors to store energy, and are \emph{intermittent} systems \cite{IN2020,LBC2017} in which energy harvesting (charging) and task execution (discharging) are mutually exclusive to allow for a single control thread.
This is captured in our model by interweaving job execution and energy harvesting.

A major challenge in the design of such an intermittent system is the variability of energy harvesting over time~\cite{SZ2016}. For example, consider a solar energy source.
Certainly, this energy can only be harvested in the daytime; but even during the day the amount of harvestable energy varies based on the cloud cover and the sun angle and its predictability is challenging~\cite{FAS2015}.
We consider the \emph{offline} version of the problem, as proposed in \cite{IN2020}, and assume that the energy harvesting profile over time is given as part of the input to our problem.

As noted in \cite{IN2020}, intermittent systems are primarily used in  monitoring and surveillance applications that collect data at a fixed rate and then process the data periodically.
We model this by associating a release time and due date for each job (task). Due to the bare-bones design of these systems, all these tasks are pretty basic and require a minimal number of cycles.
Thus, it can be assumed that all these tasks require equal processing time.

\subsection{Our results}
We present both algorithms and hardness results for several variants of the problem. To the best of our knowledge, our work is the first theoretical analysis of this problem.
Our objective is to find a feasible schedule that maximizes job throughput. When all the jobs have the same weight (i.e.\! the unweighted setting), this corresponds
to finding a schedule that maximizes the number of jobs scheduled. We call this problem \textsf{Energy Aware Scheduling (EAS)}. Otherwise, this corresponds to maximizing the weight of the jobs scheduled (i.e.\! the weighted setting).
We call this problem \textsf{Weighted Energy Aware Scheduling (WEAS)}.

In \cref{sec:optimalA} we give an optimal polynomial time algorithm for \eas{} when all the jobs have identical release times and due dates. The dynamic programming approach we use is based on some properties of an optimal solution for this case.
In \cref{sec:optimalB} we give a more efficient algorithm (that is more involved) for the same problem that is based on additional properties of an optimal solution.
In \cref{sec:greedy} we consider \eas{} when jobs have arbitrary release times and due dates and show that a simple greedy algorithm achieves a $\frac 12$-approximation.
Interestingly, the proof of the approximation ratio of this simple algorithm is quite ``tricky''.
In \cref{sec:weighted_fptas} we show an FPTAS for \weas{} when all the jobs have identical release times and due dates.
In \cref{subsec:nph_unweighted} we prove that \eas{} is weakly $\NPH$ whenever all jobs don't have both identical release times and identical due dates.
We also show that \weas{} is weakly $\NPH$ in \cref{subsec:nph_weighted}.

\subsection{Prior work}
Our model is inspired by a similar model proposed by Islam and Nirjon~\cite{IN2020}. While their model allows arbitrary processing times for both jobs and energy harvesting, we consider all processing times to be uniform.
Additionally, they model jobs as belonging to a set of periodic tasks, while we do not enforce such a constraint.
They give heuristic scheduling algorithms for the offline (in which the harvestable energy profile is part of the input) as well as the online (in which the harvestable energy profile is not known a priori) versions of the problem.
They benchmarked their algorithms against other heuristics and in the offline case also compared their algorithm to the optimal solution computed using an IP solver.

Our problem is related to scheduling with \emph{nonrenewable resources} in which jobs require a nonrenewable resource like energy or funding to be scheduled. In contrast to our problem, it is assumed that the replenishment of the resource is done instantly at predetermined times.
As in our problem, jobs can be scheduled feasibly only if the amount of available resource when they are started is at least their resource requirement.
The goal is to schedule these resource-consuming jobs to optimize various objectives. In \cite{GK2014} it was shown that computing a minimum makespan schedule of jobs with a single nonrenewable resource requirement with 2 replenishment times, arbitrary processing times, and identical release times on a single machine is (weakly) $\NPH$.
The same paper also proved that in case the number of replenishment times of the resource is part of the input, the problem is strongly $\NPH$.
\cite{GLW2011} considered the same setting but with the objective of delay minimization. \cite{GHD2005,CCE2018} (and references therein) also considered scheduling with nonrenewable resources.

Another related problem is \emph{inventory constrained scheduling}, which considers the scheduling of two types of jobs: resource-producing and resource-consuming jobs.
A resource-consuming job cannot be scheduled unless its resource requirement is available. Unlike our problem where there is no need to schedule all the resource-producing (energy harvesting) jobs, in this problem both the resource-producing and resource-consuming jobs have to be scheduled.
Several variants of this problem were considered in~\cite{BCL2008,BCL2010, DRD2020}. These variants include the case of jobs with equal processing times and the objective of minimizing the number of tardy jobs, which is the complement of unweighted throughput maximization.

\cite{B1999}, and later \cite{V2012}, considered the ``non-energy-aware'' version of our problem and showed that minimizing the number of tardy jobs, which is equivalent to maximizing the unweighted throughput, can be solved in polynomial time.
\cite{ADH2010} identified special cases in which minimizing the weighted tardiness of the ``non-energy-aware'' version of our problem is polynomial.

\subsection{Notations}
As we mostly deal with integral values, we use the notation $\irange{a}{b}$, where $a, b \in \Z^+$ and $a \le b$, to denote the set of integers between $a$ and $b$ inclusive.
We assume that time is slotted with unit length; that is, when referring to a job being scheduled or energy being harvested at time slot $j \in \Z^+$, it means they are done on the time interval $[j-1, j)$.
A job $J_i\in \mathcal{J} = \lrc{J_1, \ldots, J_n}$ can be executed at any slot in $\irange{r_i}{d_i}$, assuming that the energy available before its execution time slot is at least $e_i$.

We formally define a schedule $S$ as a pair $(J(S),\pi_S)$
where $J(S) \sse \mathcal{J}$ is the set of jobs scheduled by $S$ and $\pi_S: J(S) \rightarrow \irange{1}{T}$ is a mapping that maps each job $J_i \in J(S)$
to its execution time slot $t \in \irange{1}{T}$.
We additionally define $E_S(t)$ to be the energy amount available immediately before slot $t$ in schedule $S$, given by the equation
\begin{equation*}
  E_S(t) = \sum_{\tau=1}^{t-1} h_{\tau} - \lrp{\sum_{J_i \in \lrc{J^{\prime} \in J(S) \,|\, \pi_S(J^{\prime}) < t}} e_i + h_{\pi_S(J_i)}}.
\end{equation*}
That is, it is the total energy harvestable from all time slots $\tau \in \irange{1}{t-1}$
minus the energy consumed by the jobs scheduled up to slot $t$ and the energy harvestable at the slots when these jobs are scheduled (since energy can be harvested only at slots in which no job is scheduled).
A schedule $S$ is feasible if $\pi_S$ is one-to-one, and for each job $J_i \in \func{J}{S}$ scheduled on time slot $t_i = \func{\pi_S}{J_i}$, we have $\func{E_S}{t_i} \geq e_i$ and $t_i \in \irange{r_i}{d_i}$.

\section{An Optimal Algorithm for \eas{} when All Jobs Have Identical Release Times and Due Dates} \label{sec:optimalA}

In this section, we consider \eas{} instances in which all jobs have identical release times and due dates and present a polynomial time dynamic programming algorithm that produces an optimal schedule.
From now on, we assume that the jobs are sorted in non-decreasing order by their energy requirement ($e_1 \leq \ldots \leq e_n$).
We consider all $r_i = 1$, though the algorithm can easily be adapted to cases where all $r_i = r$ for some $r > 1$. 
We begin with two claims.

\begin{claim}
  There exists an optimal schedule that schedules a prefix of the sequence $J_1,\ldots,J_n$.
  \label{claim:prefix}
\end{claim}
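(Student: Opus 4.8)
The plan is to prove the claim by an exchange argument. Let $S = (J(S), \pi_S)$ be any optimal schedule, and suppose it does not schedule a prefix. Because all jobs share a common release time and due date, and because this is the unweighted (\eas{}) setting where the objective is simply $|J(S)|$, I expect the essential move to be: whenever a low-energy job is left out while a higher-energy job is scheduled, swap them in place. Concretely, since the jobs are sorted with $e_1 \le \cdots \le e_n$ and $J(S)$ has size $k = |J(S)|$, if $J(S)$ is not of the form $\lrc{J_1, \ldots, J_k}$ then there must exist indices $i < j$ with $J_i \notin J(S)$ but $J_j \in J(S)$; I would fix such a pair and let $t = \pi_S(J_j)$.

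Define $S'$ to be identical to $S$ except that slot $t$ now runs $J_i$ instead of $J_j$, so that $J(S') = (J(S) \setminus \lrc{J_j}) \cup \lrc{J_i}$ and $\pi_{S'}(J_i) = t$. First I would note $|J(S')| = |J(S)|$, so $S'$ has the same throughput and remains optimal once shown feasible. For feasibility I would check three things. (a) The time window: since $r_i = r_j$ and $d_i = d_j$, we have $t \in \irange{r_j}{d_j} = \irange{r_i}{d_i}$, so $J_i$ is admissible at $t$. (b) The energy at $t$ is unchanged, $E_{S'}(t) = E_S(t)$, because the set of jobs scheduled strictly before $t$ is identical in $S$ and $S'$; hence $E_{S'}(t) = E_S(t) \ge e_j \ge e_i$, so $J_i$ is feasible in its slot. (c) For every slot $\tau > t$, the only change is that the energy consumed at $t$ drops from $e_j$ to $e_i \le e_j$, while the harvested energy is unaffected (slot $t$ is occupied in both schedules, so $h_t$ is harvested in neither); thus $E_{S'}(\tau) \ge E_S(\tau)$ and every job scheduled after $t$ stays feasible. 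Slots before $t$ are untouched.

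Finally I would argue termination. Each swap replaces a scheduled index $j$ by a strictly smaller index $i$, so the potential $\sum_{J_\ell \in J(S)} \ell$ strictly decreases while the throughput stays at its optimal value. Since this potential is a positive integer, after finitely many swaps we reach an optimal schedule admitting no such pair $i < j$; that schedule's job set is then downward closed, i.e.\ a prefix $\lrc{J_1, \ldots, J_k}$. I anticipate the only delicate point is part (c): one must verify that lowering the energy used at slot $t$ can never harm a later job and that harvesting is genuinely unchanged because $t$ stays occupied. Both follow directly from the monotonicity $e_i \le e_j$ and the definition of $E_S$, but they are the crux of why sorting by energy requirement is the right order for this argument.
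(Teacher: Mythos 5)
Your proof is correct and rests on the same core observation as the paper's: a job with smaller energy requirement can always take the slot of a scheduled job with larger energy requirement, since the energy available at that slot and at all later slots can only increase. The paper performs all such replacements simultaneously (replacing the $j$-th scheduled job $J_{i_j}$ by $J_j$ in one step, using $e_j \le e_{i_j}$), whereas you do them one at a time with a termination potential; this is a presentational difference only.
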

\begin{proof}
  Suppose that the optimal algorithm schedules the $m$ jobs $J_{i_1},\ldots,J_{i_m}$, where $i_1<\cdots<i_m$.
  Then, the schedule that replaces job $J_{i_j}$ by $J_j$ for $j\in \irange{1}{m}$ is also feasible since $e_j \leq e_{i_j}$.
\end{proof}

\begin{claim}
  There exists an optimal schedule in which the jobs are scheduled in non-decreasing order of their energy requirement.
  \label{claim:nondec}
\end{claim}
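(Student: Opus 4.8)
The plan is to prove the statement by a local exchange (bubble-sort) argument starting from an arbitrary optimal schedule $S$. List the jobs of $J(S)$ in increasing order of their execution slots as $J_{\sigma(1)}, \ldots, J_{\sigma(m)}$ occupying slots $\tau_1 < \cdots < \tau_m$. If the sequence $e_{\sigma(1)}, \ldots, e_{\sigma(m)}$ is already non-decreasing we are done; otherwise there is an index $k$ with $e_{\sigma(k)} > e_{\sigma(k+1)}$, i.e.\ a larger job is run immediately before a smaller one in execution order. I would fix such a local inversion by exchanging the slots of these two jobs: run $J_{\sigma(k+1)}$ at $\tau_k$ and $J_{\sigma(k)}$ at $\tau_{k+1}$, leaving every other job where it is. Since all jobs share the same release time and due date, both jobs may legally occupy either of the two slots, so only the energy constraints need to be reverified.

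The heart of the argument is to check feasibility of the swapped schedule $S'$, and the key structural fact is that $J_{\sigma(k)}$ and $J_{\sigma(k+1)}$ are consecutive in execution order, so every slot strictly between $\tau_k$ and $\tau_{k+1}$ is idle in both $S$ and $S'$. Consequently the set of job-occupied slots before any time $t$ — and hence the total energy harvested at idle slots before $t$ — is identical in $S$ and $S'$. For slot $\tau_k$ nothing before it changes, so $E_{S'}(\tau_k) = E_S(\tau_k) \geq e_{\sigma(k)} > e_{\sigma(k+1)}$, which suffices for the smaller job now placed there. For slot $\tau_{k+1}$, the only change relative to $S$ is that the job consumed at $\tau_k$ is now $J_{\sigma(k+1)}$ instead of $J_{\sigma(k)}$, so $E_{S'}(\tau_{k+1}) = E_S(\tau_{k+1}) + \lrp{e_{\sigma(k)} - e_{\sigma(k+1)}}$. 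Combining this with the original feasibility bound $E_S(\tau_{k+1}) \geq e_{\sigma(k+1)}$ yields $E_{S'}(\tau_{k+1}) \geq e_{\sigma(k)}$, exactly the requirement of the larger job now placed there.

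Finally I would observe that for every slot $t > \tau_{k+1}$ the two schedules agree: the total energy consumed and the set of idle slots up to $\tau_{k+1}$ are unchanged by the swap, since we only permuted $e_{\sigma(k)}$ and $e_{\sigma(k+1)}$ among the same two occupied slots. Hence $E_{S'}(t) = E_S(t)$ for all such $t$ and every remaining job stays feasible. Thus $S'$ is feasible with $J(S') = J(S)$, so it is still optimal, and it has strictly fewer inversions in the execution-order energy sequence than $S$. Iterating the exchange drives the number of inversions to zero and produces an optimal schedule in non-decreasing energy order. I expect the only delicate point to be the bookkeeping that isolates exactly what the swap changes — namely the consecutiveness of the two jobs in execution order, which guarantees the harvested energy before $\tau_{k+1}$ is untouched; once that is pinned down, the inequality $E_{S'}(\tau_{k+1}) \geq e_{\sigma(k)}$ follows immediately.
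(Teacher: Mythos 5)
Your proof is correct and follows essentially the same exchange argument as the paper: the two key inequalities $E_{S'}(\tau_k)=E_S(\tau_k)\ge e_{\sigma(k)}>e_{\sigma(k+1)}$ and $E_{S'}(\tau_{k+1})=E_S(\tau_{k+1})+e_{\sigma(k)}-e_{\sigma(k+1)}\ge e_{\sigma(k)}$ are exactly the ones the paper uses. The only (immaterial) difference is that you swap pairs that are adjacent in execution order, so the intermediate slots are idle and need no checking, whereas the paper swaps an arbitrary inverted pair and notes that any jobs in between only gain energy ($E_{S'}(t)=E_S(t)+e_i-e_j>E_S(t)$).
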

\begin{proof}
  Consider a feasible schedule $S$ in which a job $J_i$ is scheduled before job $J_j$ and $e_i > e_j$.
  To prove the claim it is sufficient to show that the schedule $S^{\prime}$ given by swapping $J_i$ and $J_j$ is also feasible.
  Let $t_i=\pi_S(J_i)$ and $t_j=\pi_S(J_j)$.
  The parts of schedule $S^{\prime}$ immediately before time slot $t_i$ and after time slot $t_j$ are feasible since for every $t \in \irange{1}{t_i-1} \cup \irange{t_j+1}{T}$, we have $E_{S^{\prime}}(t) = E_S(t)$.
  Job $J_j$ can be scheduled at slot $t_i$ since $E_{S^{\prime}}(t_i) = E_S(t_i) \geq e_i > e_j$. For every slot $t \in \irange{t_i + 1}{t_j-1}$, we have  $E_{S'}(t)=E_S(t)+e_i-e_j > E_S(t)$.
  Thus, the jobs scheduled in $S'$ in these time slots are also feasible.
  Now consider slot $t_j$. Job $J_j$ is scheduled in $S$ at this slot, hence $E_S(t_j) \geq e_j$.
  It follows that $E_{S'}(t_j) = E_S(t_j) + e_i - e_j \geq e_i$, which implies $J_i$ can be scheduled at slot $t_j$. Therefore, $S^{\prime}$ is a feasible schedule.
  Since the optimal schedule must also be feasible, the same swapping procedure can be applied.
\end{proof}

We apply these observations to obtain a dynamic programming algorithm to compute an optimal schedule $O$. Define the dynamic programming ``table'' as follows.
For $i \in \irange{1}{n}$ and $t \in \irange{1}{T}$, let $A(i,t)$ be the maximum amount of available energy at the start of time slot $t+1$, where the maximum is taken over all feasible schedules of jobs $J_1,\ldots,J_i$ on the time slots $\irange{1}{t}$. 
If such a feasible schedule does not exist, then $A(i,t)=-\infty$.
Since the input size is $\Omega(n+T)$, the size of this table is polynomial. The maximum number of jobs that can be scheduled feasibly is given by the maximum $m$ for which $A(m,T) \geq 0$.
The respective optimal schedule can be computed by backtracking the intermediate values that contributed to $A(m,T)$.
The computation of $A(i,t)$ is given in \cref{alg:dynamic}, whose time complexity is $\bigo{nT}$. In \cref{sec:optimalB} we describe a more efficient algorithm with time complexity $\bigo{n\log n+T}$ that is also optimal.

\begin{algorithm}[!h]
	\caption{} \label{alg:dynamic}
	\begin{algorithmic}[1]
    \Algin (1) $n$ jobs $\{J_1,\ldots,J_n\}$, each with $r_i=1$, $d_i=T$, and energy requirement $e_i$, and (2) $h_t$ for each time slot $t \in \irange{1}{T}$
		\Algout $A(i,t)$, for $i \in \irange{1}{n}$ and $t \in \irange{1}{T}$
    \State $A(\cdot,\cdot) \gets -\infty$ \Comment{Initialize the table}
    \For {$t = 2$ to $T$}
      \If {$\sum_{j=1}^{t-1} h(j) \ge e_1$}
        \State $A(1,t) \gets \max\{A(1,t-1)+h(t),\sum_{j=1}^{t-1} h(j) - e_1\}$
      \EndIf
    \EndFor
    \For {$i = 2$ to $n$}
      \For {$t = i+1$ to $T$}
        \If {$A(i-1,t-1) \ge e_i$}
          \State $A(i,t) \gets \max\{A(i,t-1)+h(t),A(i-1,t-1) -e_i\}$
        \EndIf
      \EndFor
    \EndFor
	\end{algorithmic}
\end{algorithm}

\begin{theorem}
  The maximum  number of jobs that can be scheduled feasibly is given by the maximum $m$ for which $A(m,T) \geq 0$ in the array $A(\cdot,\cdot)$ computed by \cref{alg:dynamic}.
\end{theorem}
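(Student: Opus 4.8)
The plan is to prove correctness of the dynamic program by establishing that the recurrence in \cref{alg:dynamic} correctly computes $A(i,t)$ as defined, namely the maximum available energy at the start of slot $t+1$ over all feasible schedules of $J_1,\ldots,J_i$ on slots $\irange{1}{t}$. Once this is shown, the theorem follows immediately: by \cref{claim:prefix} and \cref{claim:nondec} some optimal schedule schedules a prefix $J_1,\ldots,J_m$ in non-decreasing order of energy requirement, so the maximum feasible $m$ is exactly the largest $m$ for which a feasible schedule of $J_1,\ldots,J_m$ on $\irange{1}{T}$ exists, which is precisely the largest $m$ with $A(m,T)\ge 0$ (a value of $-\infty$ signifying infeasibility and a nonnegative value signifying a valid schedule exists).

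First I would argue that the two structural claims let us restrict attention to schedules that schedule a prefix in sorted order; in particular, feasibility of scheduling $J_1,\ldots,J_i$ in $\irange{1}{t}$ is equivalent to the existence of \emph{some} feasible assignment, and among all such assignments we track only the residual energy $A(i,t)$, since more leftover energy can never hurt the feasibility of later jobs (energy is monotone: a schedule leaving more energy at the start of slot $t+1$ dominates one leaving less). This monotonicity/dominance observation is what justifies collapsing the exponential space of schedules into a single scalar per $(i,t)$ cell.

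Next I would verify the recurrence by a case analysis on what happens in slot $t$. For a schedule of $J_1,\ldots,J_i$ occupying $\irange{1}{t}$, either slot $t$ is used for harvesting (no job in slot $t$), in which case all $i$ jobs fit in $\irange{1}{t-1}$ and the available energy is $A(i,t-1)+h(t)$; or slot $t$ executes the largest job $J_i$ (by \cref{claim:nondec} the last-scheduled job is the most energy-hungry among those placed), which requires that $J_1,\ldots,J_{i-1}$ be feasibly scheduled in $\irange{1}{t-1}$ with enough residual energy, i.e. $A(i-1,t-1)\ge e_i$, leaving $A(i-1,t-1)-e_i$ afterward. Taking the maximum of these two options yields exactly the update in the algorithm, with the base case $i=1$ handled separately (where the total harvested energy through slot $t-1$ must cover $e_1$). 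I would also confirm the index range $t\ge i+1$ is correct, since fitting $i$ jobs requires at least $i$ occupied slots plus the convention that $A(i,t)$ measures energy at the start of slot $t+1$.

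I expect the main obstacle to be rigorously justifying the dominance argument that reduces each cell to a single scalar — specifically, arguing that always carrying forward the \emph{maximum} residual energy never excludes a schedule that could later be extended, even though a locally suboptimal (lower-energy) intermediate schedule is discarded. The subtlety is that scheduling $J_i$ in slot $t$ consumes energy but also forgoes the harvest $h(t)$, so the two branches of the recurrence are not obviously comparable; I would need an exchange-style argument showing that any feasible schedule of the full prefix can be transformed, without loss of residual energy, into one matching the recurrence's branching structure, appealing to \cref{claim:nondec} to place the heaviest job last. Once this monotone exchange is established, the inductive proof that $A(i,t)$ matches its definition, and hence the theorem, is routine.
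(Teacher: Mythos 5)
Your proposal is correct and follows essentially the same route as the paper: both directions of the equivalence rest on \cref{claim:prefix} and \cref{claim:nondec} to restrict attention to sorted prefix schedules, and on the observation that the single scalar $A(i,t)$ (maximum residual energy) is a sufficient DP state because larger residual energy dominates. The paper's own proof is far terser --- it simply asserts the two directions and leaves the recurrence's correctness and the dominance argument as ``easy to see'' --- so your writeup is a more detailed version of the same argument rather than a different one.
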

\begin{proof}
  Let $m$ be the maximum number of jobs that can be scheduled feasibly. For $i \in \irange{1}{m}$, let $t_i$ be the time slot in which $J_i$ is scheduled in such a schedule.
  It is easy to see that in this case $A(i,t_i) \geq 0$, for every $i \in \irange{1}{m}$. In the other direction, suppose that $A(i,t)\geq 0$. In this case, there exists a feasible schedule of jobs $J_1,\ldots,J_i$
  on the time slots $\irange{1}{t}$. The schedule can be computed by backtracking the intermediate values that contributed to $A(i,t)$.
\end{proof}

\section{A More Efficient Optimal Algorithm for \eas{} when All Jobs Have Identical Release Times and Due Dates}
\label{sec:optimalB}

We again assume that the jobs are sorted in non-decreasing order by the energy requirement ($e_1 \leq \ldots \leq e_n$).
We also assume that all values of $e_i$, for $i \in \irange{1}{n}$ and all values of $h_t$, for $t \in \irange{1}{T}$, are distinct.
This can be assumed without loss of generality since we can always break ties based on the index $i$ in the first case and on $t$ in the latter.
We also make use of \cref{claim:prefix,claim:nondec}.
Namely, we only consider schedules that schedule a prefix of the job sequence $J_1,\ldots, J_n$ in sorted order.

\begin{lemma}\label{lem:unique}
  Among all (considered) schedules that schedule a fixed number of jobs $k \in \irange{1}{n}$ (if such exist), there is a unique schedule $S_k$ 
  that maximizes the available energy at the start of time slot $T+1$ over all such schedules.
\end{lemma}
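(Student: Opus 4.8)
The plan is to reduce the statement to a cleaner optimization and then settle uniqueness by a single-job exchange argument. By \cref{claim:prefix,claim:nondec}, every considered schedule of $k$ jobs runs exactly $J_1,\ldots,J_k$ in increasing slots $t_1<\cdots<t_k$, with $J_i$ placed at $t_i$. Substituting this into the definition of $E_S$ at the horizon $t=T+1$ gives
\[
  E_S(T+1) \;=\; \sum_{\tau=1}^{T} h_\tau \;-\; \sum_{i=1}^{k} e_i \;-\; \sum_{i=1}^{k} h_{t_i}.
\]
The first two terms depend only on $k$, so maximizing $E_S(T+1)$ is \emph{exactly} equivalent to minimizing the wasted harvest $\sum_{i=1}^{k} h_{t_i}$, the energy forgone by running jobs in those slots. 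Hence it suffices to show that the feasible $k$-job schedule of minimum wasted harvest is unique, and this is where I will use that all $h_t$ are distinct.

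For uniqueness I would argue by contradiction. Suppose $S\neq S'$ are two considered $k$-job schedules of the same (minimum) wasted harvest, with used-slot sequences $u_1<\cdots<u_k$ and $v_1<\cdots<v_k$. Two elementary facts about $E_S$ drive every swap: if two schedules use the same slots on $\irange{1}{t-1}$, their energy profiles agree up to slot $t$; and replacing a used slot of harvest $h$ by a free slot of harvest $h'$ changes the final energy by exactly $h-h'$. Consequently, any feasible swap to a strictly smaller harvest would strictly increase $E_S(T+1)$, contradicting optimality, so it is enough to exhibit one such swap whenever $S\neq S'$.

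I would extract the swap at the first or the last index where the schedules differ. At the first differing index $m$, set $a=\min(u_m,v_m)$ and $c=\max(u_m,v_m)$; the schedules agree on all slots $<a$, so their energies coincide there, and the schedule running $J_m$ at $c$ leaves slot $a$ free. If $h_a<h_c$, I move $J_m$ from $c$ back to $a$: the energy just before $a$ equals that of the schedule that actually ran a job at $a$, hence is $\ge e_m$, no job lies in $(a,c)$, and every slot after $c$ gains energy $h_c-h_a>0$; this strictly lowers the wasted harvest, a contradiction. Symmetrically, at the \emph{last} differing index $M$, the jobs of index $>M$ occupy identical slots in both schedules, so each of $a=\min(u_M,v_M)$ and $c=\max(u_M,v_M)$ is free in the schedule not using it; if now $h_c<h_a$, I move $J_M$ later from $a$ to $c$. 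This keeps $J_M$ feasible (the energy just before $c$ is at least the energy before $a$, which is $\ge e_M$) and shifts the energy of all later jobs by $h_a-h_c>0$, again a strict improvement. When there is a single differing index these two cases are exhaustive.

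The main obstacle is the residual configuration in which the first difference prefers the later slot ($h_a>h_c$) while the last difference prefers the earlier slot ($h_a<h_c$), with $m<M$. Here the beneficial single-job move would have to place a job at a lower-harvest slot that is already \emph{occupied} in the schedule I want to improve, and the causal (prefix) nature of the energy constraint blocks a naive move-earlier. I expect to resolve this either by a chained exchange that shifts a block of jobs between the two schedules, or by induction on $k$ (peeling off $J_k$ and comparing the two induced $(k-1)$-job schedules) so as to reduce to the already-settled clean cases; keeping the chaining feasibility-preserving is the crux of the whole argument.
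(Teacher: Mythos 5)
Your reduction to minimizing the wasted harvest $\sum_{i=1}^{k} h_{t_i}$ is correct and is a clean way of making explicit what the paper only uses implicitly, and your two single-swap arguments (at the first and at the last differing index) are sound as far as they go. But the residual configuration you flag at the end is not a corner case to be mopped up later: it is the generic situation, and it is exactly where the paper's proof does all of its work. As you yourself note, the beneficial single-job move in that configuration lands on an \emph{occupied} slot, so no single-job exchange can exist, and writing ``I expect to resolve this by a chained exchange or by induction on $k$'' leaves the crux unproven. In particular, the feasibility of any move-earlier chain is not automatic (energy is a prefix constraint, so shifting a job to an earlier slot can violate its energy requirement), and that is precisely the point that needs an argument.

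Here is what the paper supplies to close this gap. Starting from the earliest slot $t$ at which the two schedules differ (say $S$ runs a job at $t$ and $S'$ does not), it locates a pair of slots $s < s'$ such that $s$ is used only by $S$, $s'$ is used only by $S'$, and every slot strictly between them is used by both schedules or by neither; such a pair always exists because the two schedules run the same number of jobs. Writing $t_1 < \cdots < t_\ell$ for the commonly used slots in between and $t_{\ell+1}=s'$, it compares $h_s$ with $h_{s'}$ (which differ, since all $h_t$ are assumed distinct) and performs a \emph{chained} shift of the entire block: if $h_{s'} < h_s$, modify $S$ by moving the job at $s$ to $t_1$ and the job at $t_i$ to $t_{i+1}$; if $h_{s'} > h_s$, modify $S'$ by moving the job at $t_1$ to $s$ and the job at $t_{i+1}$ to $t_i$. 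The forward shift is easy; the backward shift is feasible because each job moved to a slot $\tau$ of the chain has index --- hence energy requirement --- no larger than that of the job that the (feasible) schedule $S$ runs at $\tau$. Either way the final energy strictly increases, contradicting optimality. This chained exchange, together with its feasibility justification, is the missing content of your proof; the same device is reused in \cref{lem:sametimes}, so it cannot be sidestepped.
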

\begin{proof}
  Suppose for the sake of contradiction that there are two such schedules, $S$ and $S^{\prime}$, with differing mapping functions that maximize the available energy at time slot $T+1$.
  Let $t$ be the earliest time slot at which the two schedules differ. Without loss of generality assume that $S$ schedules a job on time slot $t$ while $S'$ schedules the same job on a later time slot.
  Since both $S$ and $S'$ schedule the same number of jobs, there must be at least one time slot 
  later than $t$
  when $S^{\prime}$ schedules a job while $S$ does not. Let $s'>t$ be the earliest such time slot.
  Let $s \in \irange{t}{s^{\prime}-1}$ be the latest time slot (before slot $s'$) in which $S$ schedules a job while $S'$ does not. It follows that for any time slot $t^{\prime} \in \irange{s+1}{s'-1}$, $S$ schedules a job at slot $t^{\prime}$ if and only if $S'$ schedules a job at the same slot.
  Let $t_1,\ldots, t_\ell$ be all the time slots in  $\irange{s+1}{s'-1}$ that are used by $S$ (and thus also by $S'$) to schedule jobs, if such time slots exist. Let $t_{\ell+1}=s'$.
  Suppose that $S$ schedules job $J_i$ at time slot $s$. It follows that $J_i$ is scheduled in $S'$ later than $s$. Now, compare $h_s$ and $h_{s^\prime}$. If $h_{s^\prime} < h_s$ then the schedule obtained from $S$ by moving job $J_i$ to time slot $t_1$, and moving the job scheduled at $t_i$ to $t_{i+1}$, for $i\in \irange{1}{\ell}$,
  is also feasible and results in more available energy at slot $T+1$ which is a contradiction.
  If $h_{s^\prime} > h_s$, then
  the schedule obtained from $S'$ by moving the job scheduled at $t_1$ to $s$, and moving the job scheduled at $t_{i+1}$ to $t_{i}$, for $i\in \irange{1}{\ell}$,
  is feasible since if a job $J_j$ is moved to slot $t$ then we are guaranteed that schedule $S$ scheduled a job $J_{j'}$, for some $j' \geq j$, at $t$. We again reach a contradiction since the revised schedule results in more available energy at slot $T+1$. The proof follows.
\end{proof}

For any $i\in \irange{1}{m}$ such that $J_1,\ldots,J_i$ can be scheduled feasibly, let $S_i$ be the unique schedule that schedules $J_1,\ldots,J_i$ in that order and maximizes the available energy at time $T+1$ among all such schedules.

\begin{lemma}\label{lem:sametimes}
  For any $i>1$, if $S_i$ is defined, then $i-1$ of the time slots used by schedule $S_i$ are the time slots used by $S_{i-1}$.
\end{lemma}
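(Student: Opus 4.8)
The plan is to first recast the energy objective of \cref{lem:unique} as a \emph{minimization over occupied slots}, and then prove the nesting of slot sets by an exchange argument in the same spirit as the proof of \cref{lem:unique}. For a schedule $S$ that occupies a slot set $U$ and runs $J_1,\dots,J_k$ (in this order, by \cref{claim:prefix,claim:nondec}), the defining equation for $E_S$ gives
\[
  E_{S}(T+1)=\Big(\sum_{\tau=1}^{T}h_\tau-\sum_{l=1}^{k}e_l\Big)-\sum_{t\in U}h_t ,
\]
and the bracketed quantity depends only on $k$. Hence maximizing the energy at slot $T+1$ is the same as \emph{minimizing} $\sum_{t\in U}h_t$ over all feasible slot sets $U$ of size $k$. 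So $S_{i-1}$ and $S_i$ are, respectively, the minimum-$h$-weight feasible slot sets of sizes $i-1$ and $i$ (unique, since the $h_t$ are distinct), where a set $U=\{u_1<\dots<u_m\}$ is feasible exactly when for every $j$ the harvest on the free slots before $u_j$ satisfies $\sum_{\tau<u_j,\ \tau\notin U}h_\tau\ge e_1+\dots+e_j$.

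Writing $U$ for the slots of $S_{i-1}$ and $V$ for the slots of $S_i$, the claim is precisely $U\subseteq V$. The clean way I would organize this is to prove the stronger structural fact that the feasible slot sets are the independent sets of a matroid; the nesting of minimum-weight independent sets of consecutive sizes is then the standard greedy property, and the lemma drops out. Downward closure is the easy axiom: deleting any one slot from a feasible size-$m$ set and scheduling $J_1,\dots,J_{m-1}$ on the remaining slots in order only \emph{frees} a (nonnegative) harvest amount and \emph{lowers} the energy demand, so every surviving job still sees at least its required free harvest beforehand; this is exactly the energy bookkeeping already used in \cref{claim:nondec}.

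The substantive step is the exchange axiom: given feasible $U$ (size $i-1$) and $V$ (size $i$), some $v\in V\setminus U$ has $U+v$ feasible. The natural candidate is the \emph{latest} slot of $V\setminus U$, so that inserting it shifts as few of $U$'s jobs as possible; if $v$ lands in position $m$, feasibility of $U+v$ amounts to the new job fitting ($\sum_{\tau<v,\ \tau\notin U}h_\tau\ge e_1+\dots+e_m$) together with every later job of $U$ retaining nonnegative available energy after losing the harvest $h_v$, and one establishes these by playing the feasibility inequalities of $U$ and of $V$ against each other. Equivalently — and this is the route I would actually write out, since it parallels \cref{lem:unique} — I would assume $U\not\subseteq V$, take the earliest slot $t$ at which the two schedules disagree, locate the first later slot where the roles reverse, and transform one of the two schedules by a chain of one-slot job shifts across the commonly-used slots in between; comparing $h_t$ against the $h$-value of the closing slot then determines whether the modification strictly lowers the $h$-weight of $S_{i-1}$ or of $S_i$, contradicting the optimality established in the first paragraph.

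I expect the main obstacle to be the same delicacy that makes \cref{lem:unique} nontrivial: verifying that the shift-chain (or the single insertion) keeps the available energy at least the requirement before \emph{every} job in the shifted region, not merely at the two endpoints of the chain. Because $S_i$ carries one more job than $S_{i-1}$, the counting argument guaranteeing that a ``closing'' slot exists is asymmetric rather than the balanced situation of \cref{lem:unique}, so the bulk of the work will be a careful invariant tracking the free harvest $\sum_{\tau<\cdot,\ \tau\notin U}h_\tau$ through that region, precisely as in the feasibility verification there.
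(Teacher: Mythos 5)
Your opening reformulation is correct and worth keeping: since $E_S(T+1)=\sum_{\tau=1}^{T}h_\tau-\sum_{l=1}^{k}e_l-\sum_{t\in U}h_t$ for a schedule of $J_1,\dots,J_k$ on slot set $U$, maximizing the terminal energy at fixed cardinality is the same as minimizing the $h$-weight of the occupied slots, and your feasibility characterization of a slot set is also right. The route you call ``the clean way,'' however, does not work: the feasible slot sets are \emph{not} the independent sets of a matroid, so the nesting of minimum-weight sets of consecutive sizes cannot be inherited from the basis-exchange/greedy property. Concretely, take $T=4$ with $h_1=4$, $h_2=0$, $h_3=5$, $h_4=1$ and $e_1=3$, $e_2=4$. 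Then $U=\{3\}$ is feasible (free harvest $h_1+h_2=4\ge e_1$) and $V=\{2,4\}$ is feasible ($h_1=4\ge e_1$ and $h_1+h_3=9\ge e_1+e_2$), but neither $\{2,3\}$ nor $\{3,4\}$ is feasible, since each leaves only $4<e_1+e_2=7$ of free harvest before the slot of $J_2$. So no element of $V\setminus U$ can be added to $U$, the exchange axiom fails, and in particular your ``insert the latest slot of $V\setminus U$'' candidate has no chance of working for arbitrary feasible sets. (The lemma itself is untouched by this example --- here $S_1=\{2\}\subseteq\{2,4\}=S_2$ --- precisely because it asserts nesting only for the \emph{optimal} sets; any correct proof must use the optimality of both $S_{i-1}$ and $S_i$, not mere feasibility.)

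Your second, fallback route is essentially the paper's proof. The paper assumes $S_i$ misses some slot used by $S_{i-1}$, lets $t$ be the earliest such slot, and then splits into two cases: if some slot $t''<t$ is used by $S_i$ but not $S_{i-1}$, it takes the latest such slot $t'$ and runs the shift chain of \cref{lem:unique} with $(t',t)$ in the roles of $(s,s')$; otherwise it takes the earliest slot $s'>t$ at which $S_i$ schedules a job from $\{J_1,\dots,J_{i-1}\}$ while $S_{i-1}$ does not --- restricting to the first $i-1$ jobs is exactly how the asymmetry you flag (that $S_i$ carries one extra job) is resolved, and it guarantees the ``closing'' slot exists --- and again invokes the \cref{lem:unique} exchange to strictly improve one of the two schedules, contradicting their optimality. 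So the content you would need to supply is the case split and the feasibility bookkeeping along the shift chain that you correctly identify as the crux; the matroid detour should be dropped.
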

\begin{proof}
  To obtain a contradiction suppose that $S_i$ does not use all the time slots used by $S_{i-1}$. Let $t$ be the earliest time slot at which schedule $S_{i-1}$ schedules a job while schedule $S_i$ does not.
  If there exists a slot $t''<t$ at which schedule $S_i$ schedules a job while schedule $S_{i-1}$ does not, then let $t'<t$ be the latest such slot. We reach a contradiction as in the proof of \cref{lem:unique}, where $t'$ and $t$ play the role of $s$ and $s'$, respectively.
  Otherwise, there must be a time slot $s''>t$ when schedule $S_i$ schedules a job from $\lrc{J_1,\ldots,J_{i-1}}$ while schedule $S_{i-1}$ does not. Let $s'$ be the earliest such time slot. Let $s \in \irange{t}{s'-1}$ be the latest time slot (before slot $s'$) in which schedule $S_{i-1}$ schedules a job while schedule $S_i$ does not.
  Again, a contradiction is reached as in the proof of \cref{lem:unique}.
\end{proof}

\begin{algorithm}[ht]
	\caption{}\label{alg:unique}
	\begin{algorithmic}[1]
		\Algin (1) $n$ jobs $\{J_1,\ldots,J_n\}$, each with release time $r_i=1$, due date $d_i=T$, and energy requirement $e_i$, and (2) $h_t$ for each time slot $t \in \irange{1}{T}$
		\Algout the set of time slots used by the optimal schedule
    \State $t_0 \gets 0$
    \For {$i = 1$ to $n$}
      \State $E_{\min} \gets \infty$
      \State $t_i \gets T+1$
      \For {$j = 1$ to $i$}
        \If {$\exists t\in (t_{j-1},t_j)$ s.t. the schedule that uses time slots $\{t_1,\ldots,t_{i-1}\}\cup\{t\}$ is feasible}
          \State Let $s$ be such a time slot with the minimum harvestable energy
          \If {$h_s < E_{\min}$}
            \State $E_{\min} \gets h_s$
            \State $s_{\min} \gets s$
            \State $j_{\min} \gets j$

          \EndIf
        \EndIf
      \EndFor
      \If {$E_{\min} = \infty$}
        \State \Return $\{t_1,\ldots,t_{i-1}\}$
      \Else
        \State $\{t_1,\ldots,t_i\} \gets \{t_1,\ldots,t_{j_{\min}-1}, s, t_{j_{\min}},\ldots,t_{i-1}\}$
      \EndIf
    \EndFor
	\end{algorithmic}
\end{algorithm}

\cref{lem:sametimes} implies \cref{alg:unique}. We note that this algorithm can be implemented efficiently by using a priority queue to find the interval $(t_j,t_{j+1})$ to which the additional time slot used by $S_i$ belongs. The time complexity of the resulting algorithm is $\bigo{n\log n+T}$. 
We conclude with the following theorem whose proof is a direct consequence of \cref{lem:unique,lem:sametimes}.
\begin{theorem}
  \cref{alg:unique} returns the set of time slots that can be used to feasibly schedule the maximum  number of jobs.
\end{theorem}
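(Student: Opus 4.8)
The plan is to prove, by induction on $i$, that after the $i$-th iteration of the outer loop of \cref{alg:unique} the slot set $\{t_1,\ldots,t_i\}$ it maintains is exactly the set of time slots used by the schedule $S_i$ guaranteed unique by \cref{lem:unique}. Granting this invariant, the theorem follows quickly: letting $m$ be the largest index for which $S_m$ is defined, the algorithm returns the slot set of $S_m$, which feasibly schedules the prefix $J_1,\ldots,J_m$, and by \cref{claim:prefix,claim:nondec} scheduling a prefix in sorted order is without loss of generality, so $m$ is the maximum number of jobs that can be scheduled feasibly.

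The first ingredient I would record is an energy bookkeeping identity. For any feasible schedule $S$ of the prefix $J_1,\ldots,J_k$ in sorted order, the available energy at the end satisfies $E_S(T+1) = \sum_{\tau=1}^{T} h_\tau - \sum_{k'=1}^{k} e_{k'} - \sum_{\sigma} h_\sigma$, where $\sigma$ ranges over the slots used by $S$. The first two sums depend only on $k$, so once all but one of the used slots is fixed, maximizing $E_S(T+1)$ is equivalent to minimizing the harvestable energy $h_s$ of the single free slot $s$ over those choices of $s$ that keep the schedule feasible. I would also note that inserting a slot $s$ into a gap assigns the $k$-th earliest of the resulting slots to $J_k$ (forced by \cref{claim:nondec}), so ``the schedule using $\{t_1,\ldots,t_{i-1}\}\cup\{s\}$ is feasible'' is a well-defined predicate on $s$, and the value $E(T+1)$ of that schedule depends on $s$ only through the term $-h_s$.

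For the inductive step, assume $\{t_1,\ldots,t_{i-1}\}$ equals the slot set of $S_{i-1}$. If $S_i$ is defined, \cref{lem:sametimes} gives that its slot set is that of $S_{i-1}$ together with one extra slot $s^*$; since $S_i$ is the global energy maximizer among all prefix-schedules of $J_1,\ldots,J_i$ and it extends $S_{i-1}$, it is in particular the maximizer among all feasible one-slot extensions of $\{t_1,\ldots,t_{i-1}\}$. By the bookkeeping identity, this maximizer is obtained by choosing the feasible added slot of minimum $h$. \cref{alg:unique} computes exactly this, partitioning the unused slots into the gaps $(t_{j-1},t_j)$, taking the minimum-harvest feasible slot per gap, and then the global minimum; distinctness of the $h_t$ makes this minimizer unique, matching \cref{lem:unique}, so the updated set equals $S_i$'s. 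For termination, if no gap admits a feasible added slot the algorithm returns $\{t_1,\ldots,t_{i-1}\}$; I would argue correctness by the contrapositive of \cref{lem:sametimes}, since a defined $S_i$ would supply a feasible one-slot extension of $S_{i-1}$, forcing $S_i$ undefined and hence $m=i-1$.

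The main obstacle I expect is the inductive step's reduction of the algorithm's local, per-gap greedy choice to the global energy maximization that defines $S_i$. The delicate part is pinning down that inserting a slot into a gap reassigns jobs to slots in the fixed way described above, so that feasibility and the value $E(T+1)$ are both clean functions of the candidate slot; once this is in place, combining it with the nested slot structure of \cref{lem:sametimes} and the uniqueness of \cref{lem:unique} makes the identification of the algorithm's choice with $s^*$ immediate, and the rest is routine.
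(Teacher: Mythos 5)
Your proposal is correct and follows the route the paper intends: the paper states the theorem as a direct consequence of \cref{lem:unique,lem:sametimes}, and your induction on $i$ — using \cref{lem:sametimes} to restrict attention to one-slot extensions of $S_{i-1}$, the energy-accounting identity to reduce ``maximize $E(T+1)$'' to ``minimize $h_s$ over feasible added slots $s$,'' and \cref{lem:unique} together with the distinctness of the $h_t$ to identify the algorithm's choice with $S_i$ — is exactly the argument being elided. Your version actually spells out the bookkeeping step the paper leaves implicit, so no gap.
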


\section{A Greedy $\frac{1}{2}$-Approximation for \eas{}}
\label{sec:greedy}
This section considers the general case of \eas{} when jobs have arbitrary release times and due dates. This problem variant is $\NPH$ as shown in \cref{subsec:nph_unweighted}.
We present a $\frac{1}{2}$-approximation for this case that uses a greedy scheduling strategy.

Consider the following greedy approach to scheduling jobs.
The algorithm works in iterations where in each iteration, either one job is scheduled or the algorithm stops.
Let $U$ be the set of unscheduled jobs and $G$ the schedule constructed by the algorithm. Initially, $U=\lrc{J_1,\ldots,J_n}$ and $J(G) = \emptyset$.
In iteration $\ell$, the algorithm first checks for each job $J_i \in U$ whether there exists at least one time slot it can be feasibly scheduled in without impacting the feasibility of previously scheduled jobs.
If it is not feasible to schedule any of the jobs, then the algorithm stops. Otherwise, for each job $J_i$ that can be scheduled feasibly, find the time slot $t_i$ that minimizes $Q = e_i+h_{t_i}$ over all its feasible time slots.
The job scheduled in iteration $\ell$ is the job that minimizes $Q$ over all feasible jobs that can be scheduled during this iteration. Denote this job as $J_j$, remove it from $U$, and add it to $J(G)$ with $\pi_G(J_j) = t_j$.
The algorithm's pseudocode is given in \cref{alg:general}.

\begin{algorithm}[!h]
	\caption{} \label{alg:general}
	\begin{algorithmic}[1]
    \Statex \emph{Input}: (1) $n$ jobs $\{J_1,\ldots,J_n\}$, each with release time $r_i$, due date $d_i$, and energy requirement $e_i$, and (2) $h_t$ for each time slot $t \in \irange{1}{T}$
    \Statex \emph{Output}: A feasible schedule $G$
    \State $U \gets \lrc{J_1,\ldots,J_n}$
    \State $J(G) \gets \emptyset, \func{\pi_G}{\cdot} \gets \emptyset$
    \State $G \gets \lrp{J(G), \pi_G}$
    \State $Q_{\min} \gets 1$
    \Comment{The tentative minimum energy value}
    \While {$U \neq \emptyset$ \textbf{and} $Q_{\min}>0$}
      \State $E \gets 0$ \Comment{Tracks the energy available}
      \State $Q_{\min} \gets 0$
      \For{$t = 1$ to $T$}
        \If {a job $J_i$ is already scheduled at slot $t$}
          \State $E \gets E - e_i$
          \LComment{Tentatively scheduled job causes already scheduled jobs to be infeasible}
          \If {$E < 0$}
            \State $E \gets E + Q_{\min}$
            \State $Q_{\min} \gets 0$
          \EndIf
        \ElsIf{$\lrc{ J_i\in U \,|\, t\in \irange{r_i}{d_i}}= \emptyset$}
          \State  $E \gets E + h_t$
        \Else
          \State Let $J_k$ be the job that minimizes $e_k$ over jobs in $\lrc{J_i\in U \,|\, t \in \irange{r_i}{d_i}}$
          \If{$E + Q_{\min} \geq e_k$ \textbf{and} $\left(Q_{\min}=0  \textbf{ or }  e_k + h_t < Q_{\min} \right)$}
            \State $E \gets E + Q_{\min} - e_k$
            \Comment{$J_k$ becomes the tentatively scheduled job}
            \State $Q_{\min} \gets e_k + h_t$
            \State $j \gets k$
            \State $t_j \gets t$
          \Else
            \State  $E \gets E + h_t$
          \EndIf
        \EndIf
      \EndFor
      \LComment{Schedule the job with index $j$ at slot $t_j$ if it is feasible}
      \If {$Q_{\min}>0$}
        \State $J(G) \gets J(G) \cup \lrc{J_j}$
        \State $\pi_G(J_j) \gets t_j$
        \State $U \gets U \sm \lrc{J_j}$
      \EndIf
    \EndWhile
	\end{algorithmic}
\end{algorithm}

Let $O=(J(O),\pi_O)$ be an optimal schedule, where $\abs{J(O)} = m \leq n$. Suppose that $\abs{J(G)} = x$, which implies that the greedy algorithm stops after completing $x$ iterations.
Let $J^g_1,\ldots,J^g_x$ be the jobs scheduled by the greedy algorithm, where $J^g_\ell$ is scheduled in iteration $\ell$. We prove the following lemma, which will later be used
to prove the approximation ratio.

\begin{lemma}
  At the end of iteration $\ell$ of the greedy algorithm, for $1\le \ell \le x$, there exists
  a feasible schedule $S$ such that \textbf{(i)} $\abs{J(S)} \ge \max\lrc{\ell,m-\ell}$, \textbf{(ii)} $\lrc{J_1^g, \ldots, J_{\ell}^g} \sse J(S)$ with $\pi_S\lrp{J_i^g} = \pi_G\lrp{J_i^g}$, for ${1}\le i\le {\ell}$,
  and \textbf{(iii)} $J(S) \sm \lrc{J_1^g, \ldots, J_{\ell}^g} \sse J(O)$. 
  \label{lem:greedy}
\end{lemma}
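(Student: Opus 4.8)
The plan is to prove the lemma by induction on the iteration number $\ell$. The statement bundles together three invariants, and the natural approach is to maintain all three simultaneously through a single inductive argument, using the witness schedule $S$ from iteration $\ell$ to construct the witness schedule for iteration $\ell+1$. For the base case I would consider the situation just before the greedy algorithm schedules $J_1^g$ (i.e. $\ell = 0$, or alternatively start at $\ell = 1$ directly): here $S = O$ works, since $J(O)$ has $m \ge \max\{0, m\}$ jobs, the prefix $\{J_1^g,\ldots,J_\ell^g\}$ is empty, and trivially $J(S) \sm \emptyset = J(O) \sse J(O)$.

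For the inductive step, assume a schedule $S$ satisfying (i)--(iii) exists at the end of iteration $\ell$, and let $J_{\ell+1}^g$ be the job greedy schedules in iteration $\ell+1$ at slot $t_{\ell+1} := \pi_G(J_{\ell+1}^g)$. I want to produce a new witness $S'$ that contains the longer greedy prefix $\{J_1^g,\ldots,J_{\ell+1}^g\}$ on its greedy slots, whose remaining jobs still come from $J(O)$, and whose size is at least $\max\{\ell+1, m-\ell-1\}$. The idea is to start from $S$, insert $J_{\ell+1}^g$ at slot $t_{\ell+1}$ (which is legal by invariant (ii), since the first $\ell$ greedy jobs sit on their greedy slots exactly as in $G$, and greedy certified $J_{\ell+1}^g$ feasible against precisely those commitments), and then repair feasibility by evicting at most one of the ``$J(O)$-part'' jobs of $S$. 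Scheduling $J_{\ell+1}^g$ consumes $e_{\ell+1}^g$ energy and forfeits the harvest $h_{t_{\ell+1}}$ at its slot; the greedy selection rule, which minimizes $Q = e + h_t$ over all feasible (job, slot) pairs, guarantees that any job of the $J(O)$-part that we might have to drop to restore the energy balance has $Q$-value at least that of $J_{\ell+1}^g$. This is exactly what lets us charge the energy deficit created by inserting $J_{\ell+1}^g$ against the energy freed by removing a single lower-priority job, keeping the net loss to at most one job. The count then updates correctly: we added one greedy job and removed at most one $O$-job, so $|J(S')| \ge |J(S)| - 1 \ge \max\{\ell, m-\ell\} - 1$, and one checks this dominates $\max\{\ell+1, m-\ell-1\}$ because the ``$m-\ell$'' branch drops by one while the ``$\ell$'' branch is forced up by the explicit inclusion of the $\ell+1$ greedy jobs, which already guarantees $|J(S')| \ge \ell+1$.

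The main obstacle, and the part the paper itself flags as ``tricky,'' is showing that inserting $J_{\ell+1}^g$ and deleting at most one $O$-job actually yields a \emph{feasible} schedule rather than merely one with the right cardinality. The energy constraint $E_{S'}(t) \ge e_i$ must hold at every occupied slot $t$, and inserting a job at $t_{\ell+1}$ decreases the available energy at \emph{all} later slots by $e_{\ell+1}^g + h_{t_{\ell+1}}$, potentially cascading several feasibility violations downstream, not just one. The delicate step is to argue that a single well-chosen eviction — one of the $J(O)$-part jobs scheduled at or after the first violated slot, selected so that the energy it releases covers the deficit — suffices to restore feasibility everywhere, and that this evicted job is guaranteed to exist because greedy did \emph{not} prefer it over $J_{\ell+1}^g$, i.e.\ its $Q$-value is no smaller. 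I would handle this by a careful slot-by-slot energy accounting, tracking the cumulative energy profile of $S$ versus $S'$, and using the $Q$-minimality of the greedy choice together with invariant (iii) (all non-greedy jobs of $S$ lie in $J(O)$, hence were available candidates that greedy could have chosen) to locate the eviction and bound the damage. The argument likely splits into cases according to whether the inserted slot $t_{\ell+1}$ already carried an $O$-job in $S$ or was free, mirroring the slot-shifting exchange arguments used in the proofs of \cref{lem:unique} and \cref{lem:sametimes}.
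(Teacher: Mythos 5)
Your overall architecture matches the paper's: induction on the iteration, with the witness for iteration $\ell$ obtained from the witness for iteration $\ell-1$ by inserting $J^g_{\ell}$ at its greedy slot and discarding at most one non-greedy job, using the greedy rule's $Q$-minimality to pay for the insertion. The base case, the cardinality accounting, and the key inequality $e_{\ell} + h_{t_{\ell}} \le e_j + h_{t_j}$ (valid for \emph{every} job $J_j$ in the $J(O)$-part of $S$ at its slot $t_j$, since that pair was a feasible candidate for greedy) are all correctly identified.

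The genuine gap is in your eviction rule. You propose to discard an $O$-job ``scheduled at or after the first violated slot, selected so that the energy it releases covers the deficit.'' This cannot work in general: removing a job from slot $s$ only returns energy to slots strictly after $s$, so it repairs nothing at or before $s$. Concretely, suppose every non-greedy job of $S$ is scheduled before $t_{\ell}$ and some greedy job $J^g_i$ ($i<\ell$) sits at a slot after $t_{\ell}$; inserting $J^g_{\ell}$ drains $e_{\ell}+h_{t_{\ell}}$ from all later slots and may violate $J^g_i$, which you are not allowed to evict, and there is no $O$-job at or after that violated slot to evict under your rule. The paper's rule is different and position-based, not deficit-based: if $t_{\ell}$ is already occupied in $S$, discard the occupant; otherwise discard the \emph{earliest-scheduled} job of $J(S)\setminus\{J^g_1,\dots,J^g_{\ell}\}$, wherever it is. The case analysis then splits on whether that job's slot $t_j$ satisfies $t_j<t_{\ell}$ (the released energy $e_j+h_{t_j}\ge e_{\ell}+h_{t_{\ell}}$ is already available by slot $t_{\ell}$, so $E_{S'}(t)\ge E_S(t)$ for all $t>t_{\ell}$ and nothing downstream breaks) or $t_j>t_{\ell}$ (then every job of $S'$ scheduled before $t_j$ is a greedy job at its greedy slot, so that prefix is feasible because the greedy schedule itself is, and $E_{S'}(t)\ge E_S(t)$ for $t>t_j$). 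The universal $Q$-inequality is precisely what makes the positional choice safe, so that no ``covering the deficit'' selection is needed; without replacing your rule by something of this kind, the feasibility-repair step of your induction does not go through.
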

\begin{proof}
  We prove the lemma by induction. For the induction base we add a ``dummy'' iteration $0$ before the actual start of the greedy algorithm. The claim holds for $\ell=0$  since at the beginning of the greedy algorithm the schedule $O$ is feasible and $|J(O)|=m$.
  Consider the end of iteration $\ell$, for $\ell \geq 1$. By the inductive hypothesis, at the start of iteration $\ell$, there exists a feasible schedule $S$ of at least $\max\lrc{\ell-1,m-\ell+1}$ jobs that schedules the jobs $J^g_1,\ldots,J^g_{\ell-1}$ at time slots $\pi_G(J^g_1),\ldots,\pi_G(J^g_{\ell-1})$, respectively, and the remaining jobs belong to $J(O)$.
  Suppose that $S$ schedules job $J^g_\ell$ on time slot $t_{\ell} = \pi_G(J^g_\ell)$. In this case $S$ satisfies the conditions of the lemma also for $\ell$, since it schedules at least $\max\lrc{\ell,m-\ell}$ jobs, including the jobs $J^g_1,\ldots,J^g_{\ell}$ at slots $\pi_G(J^g_1),\ldots,\pi_G(J^g_{\ell})$, and the remaining jobs belong to $J(O)$.

  From now on assume that $S$ has not scheduled job $J^g_\ell$ at time slot $t_{\ell}$. We show how to obtain a schedule $S^{\prime}$ that satisfies the conditions of the lemma for $\ell$.
  We start with schedule $S$ and modify it as follows. First, we schedule the job $J^g_{\ell}$ at slot $t_{\ell}$ (in case $J^g_\ell \in J(S)$ this would just change the execution time slot of $J^g_\ell$).
  If $S$ already scheduled another job at slot $t_{\ell}$, then this job is discarded. Otherwise (that is, if $S$ has not scheduled another job at slot $t_{\ell}$), then the job in $J(S) \sm \lrc{J^g_1,\ldots,J^g_{\ell}}$ that $S$ schedules earliest is discarded.

  Before showing that schedule $S^{\prime}$ is feasible, we show that it satisfies the conditions of the lemma. Clearly, $S^{\prime}$ schedules the jobs $J^g_1,\ldots,J^g_{\ell}$ at time slots $\pi_G(J^g_1),\ldots,\pi_G(J^g_{\ell})$, and the remaining jobs in $J(S')$ belong to $J(O)$.
  Also, since exactly one job is discarded from $J(S)$, $\abs{J(S')} \geq \max\lrc{\ell,m-\ell}$. 

  Schedule $S'$ is feasible if and only if for all $J_i \in J(S'), \, E_{S'}(\pi_{S'}(J_i)) \geq e_i$. This is clearly the case for all jobs in $J(S')$ that are scheduled at slots $\irange{1}{t_{\ell}-1}$ since schedule $S$ is feasible.
  Next, consider job $J^g_{\ell}$ scheduled in $S'$ at slot $t_{\ell}$ and the rest of the jobs in $S'$ that are scheduled after this time slot. We distinguish between three cases.

  \par\noindent\textbf{Case 1:} $S$ schedules another job $J_j\in J(S)\sm\{J^g_1,\ldots,J^g_{\ell}\}$ at time slot $t_{\ell}$. In this case $E_{S'}(t_{\ell})=E_{S}(t_{\ell}) \geq e_j$.
  However, since the greedy algorithm preferred to schedule job $J_i=J^g_{\ell}$ at slot $t_{\ell}$ while $J_j$ was also feasible at the same time, we must have $e_i \leq e_j$ and thus $J_i = J^g_{\ell}$ is feasible at slot $t_{\ell}$.
  Since $e_i+h_{t_{\ell}} \leq e_j+h_{t_{\ell}}$, we have that for all $t \in \irange{t_{\ell}+1}{T}, \, E_{S'}(t) \geq E_{S}(t)$, and thus the jobs in $S'$ scheduled after time slot $t_{\ell}$ are also feasible.

  In the remaining two cases, $S$ has not scheduled another job at slot $t_{\ell}$. In these cases, the job in $J(S)\sm\{J^g_1,\ldots,J^g_{\ell}\}$ that $S$ schedules earliest is discarded. Let this job be denoted by $J_j$ and the
  time slot it was scheduled in by $t_j = \pi_S(J_j)$.

  \par\noindent\textbf{Case 2:} $t_j < t_{\ell}$. Again, since the greedy algorithm preferred to schedule $J_i=J^g_{\ell}$ at time slot $t_{\ell}$ while $J_j$ was also feasible at time slot $t_j$, we must have $e_i + h_{t_{\ell}} \leq e_j + h_{t_j}$.
  This implies that $E_{S'}(t_{\ell}) \geq E_{S}(t_{\ell}) + e_j + h_{t_j} \geq e_i$ and thus $J_i=J^g_{\ell}$ is feasible at slot $t_{\ell}$. It also implies that for all $t \in \irange{t_{\ell}+1}{T}, \, E_{S'}(t) = E_{S}(t) + e_j + h_{t_j} - e_i - h_{t_{\ell}} \ge E_{S}(t)$, and thus the jobs in $S'$ scheduled after time slot $t_{\ell}$ are also feasible.

  \par\noindent\textbf{Case 3:} $t_j > t_{\ell}$. Again, since the greedy algorithm preferred to schedule $J_i=J^g_{\ell}$ at time slot $t_{\ell}$ while $J_j$ was also feasible at time slot $t_j$, we must have $e_i + h_{t_{\ell}} \leq e_j+h_{t_j}$.
  This implies that for all $t \in \irange{t_j+1}{T}, \, E_{S'}(t)\ge E_{S}(t)$, and thus the jobs in $S'$ scheduled after time slot $t_j$ are feasible. Since $J_j$ is the earliest job in $J(S) \sm \lrc{J^g_1,\ldots,J^g_{\ell}}$, all the jobs in $J(S')$ scheduled before $t_j$ are in $\lrc{J^g_1,\ldots,J^g_{\ell}}$.
  Since the greedy schedule is guaranteed to be feasible, the schedule of these jobs in $S'$ is also feasible.

  Therefore, the schedule $S^{\prime}$ generated from modifying $S$ is always feasible. As $S^{\prime}$ was already shown to satisfy the constraints of the lemma, we have the proof of the inductive step.
\end{proof}

\begin{theorem}
  The greedy algorithm yields a $\frac{1}{2}$-approximation of the optimal solution.
\end{theorem}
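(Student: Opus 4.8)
The plan is to apply \cref{lem:greedy} at the final iteration $\ell = x$ and combine its guarantee with the stopping condition of the greedy algorithm. Since all jobs in \eas{} have equal weight, throughput equals the number of scheduled jobs, so it suffices to show that $x = \abs{J(G)} \ge m/2$, where $m = \abs{J(O)}$.

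First I would instantiate \cref{lem:greedy} with $\ell = x$ to obtain a feasible schedule $S$ with $\abs{J(S)} \ge \max\{x, m-x\}$, in which the $x$ greedy jobs $J^g_1,\ldots,J^g_x$ occupy exactly their greedy slots $\pi_G(J^g_1),\ldots,\pi_G(J^g_x)$, and every remaining job of $S$ belongs to $J(O)$. The key claim I would then establish is that $\abs{J(S)} \le x$; combined with $\abs{J(S)} \ge m - x$ this yields $x \ge m - x$, i.e. $x \ge m/2$, which finishes the proof.

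To prove $\abs{J(S)} \le x$, I argue by contradiction. Suppose $\abs{J(S)} \ge x+1$. Then $S$ contains some job $J_a \in J(O) \sm \{J^g_1,\ldots,J^g_x\}$ scheduled at some slot $t_a$. Delete from $S$ every job other than the greedy jobs and $J_a$; since removing jobs only increases the available energy at every slot, the resulting schedule remains feasible. In it, the greedy jobs sit at their exact greedy slots and $J_a$ sits at $t_a$, so $J_a$ can be scheduled at $t_a$ without impacting the feasibility of any of $J^g_1,\ldots,J^g_x$. But $J_a$ is still unscheduled by greedy (it is not one of the $J^g_i$), so at the start of iteration $x+1$ there would remain a job that can be feasibly scheduled, contradicting the fact that the greedy algorithm stopped after exactly $x$ iterations. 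Hence $\abs{J(S)} \le x$, as required.

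The routine parts, namely that dropping jobs preserves feasibility and that equal weights make throughput equal to the job count, are immediate. The main obstacle, and the reason \cref{lem:greedy} is phrased as it is, is to align the stopping condition precisely with the lemma's guarantee: condition \textbf{(ii)} places the greedy jobs at their \emph{literal} greedy slots, which is exactly what lets the surplus job $J_a$ serve as a witness that the greedy termination test should have failed. Getting this alignment right, rather than any computation, is the crux of the argument.
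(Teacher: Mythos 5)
Your proof is correct and is essentially the paper's argument: both rest entirely on \cref{lem:greedy} together with the observation that any job of the guaranteed schedule $S$ outside $\{J_1^g,\ldots,J_\ell^g\}$ is a witness that greedy can schedule one more job. You run this as a contrapositive at termination (showing $\abs{J(S)}\le x$, hence $x\ge m-x$), while the paper runs it forward to show greedy survives $\lceil m/2\rceil$ iterations; the two are interchangeable.
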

\begin{proof}
  Consider any $1 \leq \ell \leq \ceil{\frac{m}{2}}$. \cref{lem:greedy} implies that at the start of iteration $\ell$ there exists a feasible schedule of at least $m-(\ell-1) \geq \ell$ jobs that schedules the jobs $J^g_1,\ldots,J^g_{\ell-1}$ at times $\pi_G(J^g_1),\ldots,\pi_G(J^g_{\ell-1})$.
  Thus, there exists at least one job that can be feasibly scheduled in iteration $\ell$. It follows that
  the greedy algorithm completes at least $\ceil{\frac{m}{2}}$ iterations, which implies that $2x \ge m$.
\end{proof}

\section{An FPTAS for \weas{} when All Jobs Have Identical Release Times and Due Dates}
\label{sec:weighted_fptas}
We now consider \weas{} in the special case of jobs with identical release times and due dates. This variant of the problem is $\NPH$ as shown in \cref{subsec:nph_weighted}.
We present a fully polynomial time approximation scheme (FPTAS) that for any constant $\varepsilon$ finds a feasible schedule that is a $\lrp{1-\varepsilon}$-approximation to the maximum weight of the scheduled jobs in any feasible schedule.
From now on, we assume that the jobs are sorted in non-decreasing order by their energy requirement ($e_1 \leq \ldots \leq e_n$).
We consider all $r_i = 1$, though the algorithm can easily be adapted to cases where all $r_i = r$ for some $r > 1$. 
We also assume WLOG that for any job $J_i \in \calJ$, there is a feasible schedule consisting of only $J_i$ (otherwise, we can omit $J_i$ from the input).
Let $W_{\max}= \max_{i} \lrc{w_i}$, where all the weights are assumed to be integers. Note that the optimal weight is in the interval $[W_{\max},nW_{\max}]$.

It is easy to see that \cref{claim:nondec} also holds for the weighted setting with the same proof as before. Thus, we can consider only feasible schedules in which the jobs are scheduled in non-descending order of their energy consumption.
Similar to \cref{sec:optimalA}, we first develop a dynamic programming algorithm to compute an optimal schedule $O$. However, the size of the dynamic programming ``table'' in this case is pseudo polynomial.
We show how to make the size of the table polynomial at the expense of losing optimality and replacing it by a $\lrp{1-\varepsilon}$ approximation.

For $i \in \irange{1}{n}$, $t\in \irange{1}{T}$, and $w \in \irange{0}{nW_{\max}}$, let $A(i,t,w)$ be the maximum amount of available energy at the start of time slot $t+1$, where the maximum is taken over all feasible schedules of subsets of $\lrc{J_1,\ldots,J_i}$ on the time slots $\irange{1}{t}$ such that the weight of the scheduled jobs is at least $w$.
If such a feasible schedule does not exist, then $A(i,t,w)=-\infty$.
The maximum weight of jobs that can be scheduled feasibly is given by the maximum $w$ for which $A(n,T,w) \geq 0$. The respective optimal schedule can be computed by backtracking the intermediate values that contributed to $A(n,T,w)$.
The computation of $A\lrp{i, t, w}$ is given in \cref{alg:weighted}.

\begin{algorithm}[!h]
	\caption{} \label{alg:weighted}
	\begin{algorithmic}[1]
    \Algin (1) $n$ jobs $\{J_1,\ldots,J_n\}$, each with $r_i=1$, $d_i=T$, energy requirement $e_i$, and weight $w_i$, and (2) $h_t$ for each time slot $t \in \irange{1}{T}$
		\Algout $A(i,t,w)$, for $i \in \irange{1}{n}$, $t\in \irange{1}{T}$, and $w \in \irange{0}{nW_{\max}}$
    \State $A(\cdot,\cdot,\cdot) \gets -\infty$ \Comment{Initialize the table}
    \For {$i = 1$ to $n$}
      \For {$t = 1$ to $T$}
        \State $A(i,t,0) \gets \sum_{j=1}^{t} h(j)$
      \EndFor
    \EndFor
    \For {$t = 2$ to $T$}
      \If {$\sum_{j=1}^{t-1} h(j) \ge e_1$}
        \For {$w = 1$ to $w_1$}
          \State $A(1,t,w) \gets \max \lrc{A(1, t-1, w) + h(t), \sum_{j=1}^{t-1} h(j)-e_1}$
        \EndFor
      \EndIf
    \EndFor
    \For {$i = 2$ to $n$}
      \For {$t = 2$ to $T$}
        \For {$w = 1$ to $iW_{\max}$}
          \If {$A(i-1,t-1,\max\lrc{0,w-w_i}) \ge e_i$}
            \State $A(i,t,w) \gets \max \{A(i-1, t, w), A(i,t-1,w)+h(t),A(i-1,t-1, \max\lrc{0,w-w_i}) -e_i\}$
          \Else
            \State $A(i,t,w) \gets \max \lrc{A(i-1, t, w), A(i, t-1, w) + h(t)}$
          \EndIf
        \EndFor
      \EndFor
    \EndFor
	\end{algorithmic}
\end{algorithm}

\begin{theorem}
  The maximum  weight of jobs that can be scheduled feasibly is given by the maximum $w$ for which $A(n,T,w) \geq 0$ in the array $A(\cdot,\cdot,\cdot)$ computed by \cref{alg:weighted}.
  \label{thm:weighted}
\end{theorem}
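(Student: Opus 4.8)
The plan is to establish the theorem by first proving, via induction on $i$, that for every $t \in \irange{1}{T}$ and every $w \in \irange{0}{nW_{\max}}$ the value $A(i,t,w)$ produced by \cref{alg:weighted} equals its intended meaning, namely the maximum energy available at the start of slot $t+1$ taken over all feasible schedules of subsets of $\lrc{J_1,\ldots,J_i}$ on the slots $\irange{1}{t}$ whose total weight is at least $w$ (and $-\infty$ when no such schedule exists). Throughout I would invoke \cref{claim:nondec}, which the paper notes carries over to the weighted setting, so that attention can be restricted to schedules placing jobs in non-decreasing order of energy; since the jobs are indexed with $e_1 \le \cdots \le e_n$, whenever a subset contains $J_i$ this job has the largest energy and may be assumed to occupy the latest occupied slot. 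Once the DP is shown correct, the theorem follows from one observation: any feasible schedule has nonnegative energy at slot $T+1$, because feasibility forces $E_S(t_i) \ge e_i \ge 0$ at each scheduled job and energy only grows thereafter by harvesting. Hence $A(n,T,w) \ge 0$ exactly when a feasible schedule of weight at least $w$ exists, and the largest such $w$ is precisely the maximum achievable weight.

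For the base cases I would check the $w = 0$ entries and the $i = 1$ layer directly. The empty schedule certifies $A(i,t,0) = \sum_{j=1}^{t} h_j$, since no energy is consumed and every slot is available for harvesting. For $i = 1$ and $w \in \irange{1}{w_1}$, reaching weight at least $w$ forces scheduling $J_1$, which may be placed either before slot $t$ (yielding $A(1,t-1,w) + h_t$ after harvesting $h_t$ at the now-free slot $t$) or exactly at slot $t$ (yielding $\sum_{j=1}^{t-1} h_j - e_1$, feasible precisely when $\sum_{j=1}^{t-1} h_j \ge e_1$); taking the maximum reproduces the guarded recurrence, and for $w > w_1$ no schedule exists so the value stays $-\infty$.

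For the inductive step ($i \ge 2$) I would prove the recurrence by a two-way inclusion. First, every branch is \emph{realizable}: by induction the entries $A(i-1,t,w)$, $A(i,t-1,w)$, and $A(i-1,t-1,\max\lrc{0,w-w_i})$ correspond to feasible schedules, and extending them respectively by leaving the schedule unchanged, by harvesting $h_t$ at the free slot $t$, or by appending $J_i$ at slot $t$ (feasible exactly when the guard $A(i-1,t-1,\max\lrc{0,w-w_i}) \ge e_i$ holds, with weight rising from at least $\max\lrc{0,w-w_i}$ to at least $w$) gives a feasible schedule of the claimed weight and final energy. Conversely, any optimal feasible schedule $S$ on $\irange{1}{t}$ using a subset of $\lrc{J_1,\ldots,J_i}$, taken in non-decreasing energy order, falls into three exhaustive cases. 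If $J_i \notin J(S)$, then $S$ uses only $\lrc{J_1,\ldots,J_{i-1}}$ and its final energy is bounded by $A(i-1,t,w)$. If $J_i \in J(S)$ but slot $t$ is empty, then stripping the harvest at $t$ leaves a feasible schedule on $\irange{1}{t-1}$ of the same weight, so the energy of $S$ is at most $A(i,t-1,w) + h_t$. If $J_i \in J(S)$ and slot $t$ is occupied, then because $J_i$ has the largest energy and is scheduled last, slot $t$ must hold $J_i$; removing it yields a feasible schedule of $\lrc{J_1,\ldots,J_{i-1}}$ on $\irange{1}{t-1}$ of weight at least $w - w_i$ whose energy at slot $t$ is at least $e_i$, so the energy of $S$ is at most $A(i-1,t-1,\max\lrc{0,w-w_i}) - e_i$ and the guard is satisfied. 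Taking the maximum over the branches therefore equals the defined quantity.

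I expect the main obstacle to be the converse direction of the inductive step, and specifically the argument that when $J_i$ is used and slot $t$ is occupied it is exactly $J_i$ that sits at slot $t$. This is where \cref{claim:nondec} together with the energy-sorted indexing is essential (equal-energy jobs may be reordered freely to put $J_i$ last without changing feasibility or the energy profile), and it is precisely this placement that makes the weight bookkeeping $\lrp{w - w_i}$ for the prefix, clamped to $\max\lrc{0,w-w_i}$ to handle $w_i \ge w$, and the feasibility guard align exactly with the three terms of the recurrence.
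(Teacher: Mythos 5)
Your proof is correct and takes essentially the same approach as the paper's, which simply asserts the two-way correspondence between nonnegative entries $A(i,t,w)$ and feasible schedules of weight at least $w$ and leaves both directions as ``easy to see.'' Your induction on the recurrence, using \cref{claim:nondec} to argue that $J_i$ occupies the last occupied slot, supplies the details the paper omits.
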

\begin{proof}
  Let $O$ be the optimal schedule, where $J(O) = \lrc{J_{i_1},\ldots,J_{i_m}}$. For $j\in \irange{1}{m}$, let $t_j=\pi_O(J_{i_j})$ be the time in which $J_{i_j}$ is scheduled, and let $W_j=\sum_{\ell=1}^j w_{i_\ell}$.
  It is easy to see that in this case $A(i_j,t_j,W_j)\ge 0$. In the other direction, suppose that $A(i,t,w)\ge 0$. In this case there exists a feasible schedule of a subset of the jobs $\lrc{J_1,\ldots,J_i}$ on the time slots $\irange{1}{t}$.
\end{proof}

We convert the pseudo polynomial algorithm to an FPTAS by rounding down the weights. Fix an $\varepsilon>0$. Suppose that all weights are rounded down to the nearest multiple of $\varepsilon W_{\max}/n$.
Since the weights are rounded down, every solution of the rounded down instance corresponds to a solution of the original instance with at least the same weight.
On the other hand, an optimal solution of the original instance corresponds to a solution of the rounded down instance whose weight is at least $\lrp{1-\varepsilon}$ times the weight of the optimal solution.
This is because the total decrease in weight due to rounding is no more than $\varepsilon W_{\max}$, and since the optimal weight is at least $W_{\max}$, this loss is no more than $\varepsilon$ fraction of the optimal solution.
It follows that the optimal solution of the rounded down instance corresponds to a solution of the original instance whose weight is at least $\lrp{1-\varepsilon}$ times the optimal weight of the original instance.
We compute the optimal solution using a dynamic programming approach as shown above with only one difference: the dynamic programming ``table'' is defined for coordinates $A(i,t,x)$, where $i \in \irange{1}{n}$, $t\in \irange{1}{T}$, and $x$ is an integer that represents the rounded weight $x \varepsilon W_{\max}/n$.
Since the optimal weight is bounded by $nW_{\max}$, $x$ is in the range $\irange{0}{n^2/\varepsilon}$, which is polynomial in $n$ and $1/\varepsilon$. We conclude with the following theorem.
\begin{theorem}
  There exists an FPTAS for \weas{} when all jobs have identical release times and due dates.
  \label{thm:fptas}
\end{theorem}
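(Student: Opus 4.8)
The plan is to assemble the FPTAS from three ingredients that are already laid out above: the correctness of the pseudo-polynomial dynamic program, a weight-rounding step, and a running-time count. First I would invoke \cref{thm:weighted} to conclude that \cref{alg:weighted}, run on the rounded-down instance, correctly reports the maximum achievable (rounded) weight as the largest index $x$ with $A(n,T,x)\ge 0$. The key observation that makes this reuse legitimate is that rounding only alters the weights $w_i$ and leaves every energy requirement $e_i$ and every harvest value $h_t$ untouched; since the feasibility predicate depends solely on energies and harvests, it is identical for the original and rounded instances. Thus the DP recurrence and its correctness proof carry over verbatim, and no separate feasibility argument is needed — the only quantity the rounding changes is the objective value.

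Second, I would quantify the approximation loss. Rounding each $w_i$ down to the nearest multiple of $\varepsilon W_{\max}/n$ decreases that job's weight by less than $\varepsilon W_{\max}/n$, so any schedule of at most $n$ jobs loses less than $\varepsilon W_{\max}$ in total weight. Fixing an optimal schedule $O$ of the original instance, the very same jobs scheduled in the very same slots remain feasible (by the feasibility invariance above) and have rounded weight at least $\mathrm{OPT}-\varepsilon W_{\max}$. Using the stated lower bound $\mathrm{OPT}\ge W_{\max}$ — itself justified because the single heaviest job is assumed schedulable on its own — this is at least $\lrp{1-\varepsilon}\mathrm{OPT}$. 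Hence the optimum of the rounded instance is at least $\lrp{1-\varepsilon}\mathrm{OPT}$, and since rounded weights never exceed original weights, the true weight of the schedule recovered by backtracking the DP is at least its rounded weight, yielding a $\lrp{1-\varepsilon}$-approximation overall.

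Third, I would bound the running time. After rounding, the weight coordinate is the integer index $x$ whose true weight is $x\varepsilon W_{\max}/n$; because the optimum is at most $nW_{\max}$, the relevant values of $x$ lie in $\irange{0}{n^2/\varepsilon}$. The table therefore has $\bigo{n\cdot T\cdot n^2/\varepsilon}=\bigo{n^3 T/\varepsilon}$ entries, and each is filled in constant time once the prefix sums $\sum_{j\le t} h_j$ have been precomputed in $\bigo{T}$ time. The total running time is thus polynomial in $n$, $T$, and $1/\varepsilon$; since the input has size $\Omega(n+T)$, this is polynomial in the input size and in $1/\varepsilon$, which is precisely the defining property of an FPTAS.

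The main obstacle is the bookkeeping of the approximation bound rather than any deep idea: I must make sure the chain ``optimal original schedule $\to$ feasible rounded schedule of nearly the same weight $\to$ rounded optimum $\to$ true weight of the returned schedule'' is tight, and that the additive loss $\varepsilon W_{\max}$ is converted into a multiplicative $\lrp{1-\varepsilon}$ factor by correctly invoking $\mathrm{OPT}\ge W_{\max}$. Everything else — feasibility invariance under weight rounding and the polynomial table size — follows directly from the construction and from \cref{thm:weighted}, so the proof should be short.
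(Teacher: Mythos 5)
Your proposal is correct and follows essentially the same route as the paper: round weights down to multiples of $\varepsilon W_{\max}/n$, rerun the dynamic program of \cref{alg:weighted} over the polynomial-size weight index range $\irange{0}{n^2/\varepsilon}$, and convert the additive loss of at most $\varepsilon W_{\max}$ into a multiplicative $(1-\varepsilon)$ factor via $\mathrm{OPT}\ge W_{\max}$. Your added observations (feasibility invariance under weight rounding, and the explicit running-time count) are correct refinements of the paper's argument rather than a different approach.
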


\section{Hardness Results}
\label{sec:nph}
In this section, we show the hardness of different variants of the problem in both the
unweighted and weighted settings.

\subsection{Unweighted setting}
\label{subsec:nph_unweighted}
In \eas{} we consider the cases of
arbitrary release times and identical due dates, identical release times and arbitrary due dates, and arbitrary release times and arbitrary due dates
and show they are all (weakly) $\NPH$. We mainly use a reduction from the $\ksum$ problem
(i.e.\! the parameterized version of \subsetsum{} whose hardness is shown in \cite{PW2010}).

\begin{theorem}
    \label{thm:nph_arb_release}
    \eas{} when jobs have arbitrary release times and identical due dates is (weakly) $\NPH$.
\end{theorem}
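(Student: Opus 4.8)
The plan is to reduce from $\ksum$: given positive integers $a_1,\ldots,a_n$, a target $B$, and a cardinality $k$, decide whether some $I\subseteq\irange{1}{n}$ with $\abs{I}=k$ satisfies $\sum_{i\in I}a_i=B$. From such an instance I would build an \eas{} instance with a common due date $T$ and carefully chosen release times, so that the maximum number of schedulable jobs equals a fixed target $N$ exactly when the $\ksum$ instance is a yes-instance. Since all numbers produced are polynomially bounded in $n$ and in the numeric values of the $a_i$ and $B$, the reduction is polynomial and establishes weak $\NPH$.

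The construction I have in mind has three consecutive regions created purely through release times (all due dates equal $T$). A leading block of slots carries no releasable job, so it is forced idle and harvests a known reservoir $H$. A middle selection region holds $n$ item jobs $J_1,\ldots,J_n$ with energy requirements $e_i=a_i+M$ for a large $M>\sum_i a_i$; the additive offset $M$ lets a single energy value simultaneously encode both the chosen sum (low-order part) and the chosen cardinality (high-order part, the coefficient of $M$). I would use the release times and the number of usable slots in this region to cap the number of schedulable items at $k$, so that attaining the target throughput forces exactly $k$ items to be scheduled, pinning the cardinality. Finally, a verification gadget placed at the end (one or two jobs released late, together with a tuned tail of the harvest profile) is designed to be feasible, alongside the $k$ items, precisely when the selected energies sum to $B$.

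The key steps, in order, are: \textbf{(1)} fix $H$, $M$, the harvest profile $h_t$, and the gadget's energy requirement so that scheduling $k$ items of total surplus $\sigma=\sum_{i\in I}a_i$ leaves energy $H-kM-\sigma$ before the gadget, and the gadget is feasible iff $\sigma\le B$; \textbf{(2)} add a complementary mechanism that makes attaining throughput $N$ impossible unless $\sigma\ge B$ as well, yielding $\sigma=B$; \textbf{(3)} verify the \emph{prefix} energy conditions $E_S(t)\ge e_i$ at every occupied slot (not merely the aggregate balance), where the item jobs share a release time so they may be reordered into non-decreasing energy order as in \cref{claim:nondec}; and \textbf{(4)} prove both directions of the equivalence and read off the target count $N$.

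The hard part will be step \textbf{(2)}: forcing the sum to be \emph{exactly} $B$. Energy feasibility is inherently one-sided---requiring $E_S(t)\ge e_i$ only ever upper-bounds the consumption before slot $t$, and extra available energy never hurts---so a naive gadget yields only $\sigma\le B$, which is trivially satisfiable and not hard. Moreover, because all due dates equal $T$, every job may slide arbitrarily late, so one cannot pin a job to an early slot from the release side alone. My plan to obtain the matching lower bound $\sigma\ge B$ is to encode the complement: enforce, via a second checkpoint (separated from the first by a forced-idle refill block that restores the reservoir to a known level) or via companion jobs whose feasibility depends on the harvest collected at the idle item slots, that the energy $\sum_{i\notin I}a_i$ associated with the \emph{unselected} items is at most $A-B$, where $A=\sum_i a_i$. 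Together with $\sigma\le B$ and $\abs{I}=k$, this forces $\sigma=B$. Getting these two constraints to be simultaneously binding---while keeping the maximization objective aligned so that throughput $N$ is reachable exactly at a valid $\ksum$ solution, and while the cumulative (rather than independent) nature of the energy variable is handled correctly---is the delicate core of the proof; the remaining case analysis of prefix feasibility is routine.
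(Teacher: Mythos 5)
Your reduction source, the overall architecture (a forced-idle reservoir, an additive offset $M$ that counts cardinality in the high-order digit while the item values live in the low-order digit, a cap of $k$ selected items, and a terminal verification requirement), and your diagnosis of where the difficulty lies all match the paper's proof. The gap is that you leave your step \textbf{(2)} as a choice between two mechanisms, and the choice matters: the plain ``second checkpoint after a refill block'' variant fails for exactly the carry-over reason you flag. If the first reservoir is $kM+B$ and the selected items consume $kM+\sigma$ with $\sigma<B$, the surplus $B-\sigma$ rolls forward; a second reservoir of $(n-k)M+(A-B)$ then makes $(n-k)M+(A-\sigma)$ available in total, which is \emph{exactly} the requirement of the unselected items. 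So the second checkpoint is passed for every $\sigma\le B$ and nothing ever forces $\sigma\ge B$.

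The paper closes this with your second alternative, but its realization conflicts with one detail of your setup: you want the item jobs to share a release time, whereas the paper sets $r_i=i+1$ precisely so that job $J_i$ is tied to slot $i+1$, and then sets the harvest on item slot $i+1$ to $A-a_i$ (in the paper's notation, $h_t=S-\alpha_{t-1}$ with $S=\sum_i\alpha_i$). With items pinned to slots, the harvest actually collected in the selection region equals $A(n-k-1)+\sigma$, a low-order quantity that is \emph{increasing} in the selected sum $\sigma$; the single refill slot's harvest is then tuned so that the remaining $n-k$ jobs --- which themselves play the role of your ``companion jobs,'' since the threshold demands that all $n$ jobs be scheduled --- are feasible only if this term is at least $A(n-k-1)+B$. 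The $A n$-scale surplus carried forward and the $A n$-scale increase in the tail's requirement cancel, and the deficit $B-\sigma$ survives at the $O(A)$ scale, yielding $\sigma\ge B$; the upper bound $\sigma\le B$ comes from the first reservoir as you intended. If the items share a release time, the harvest collected at idle item slots depends only on \emph{how many} items are idle, not \emph{which}, and cannot encode the complement sum. So you must either stagger the release times (and then argue, as the paper does, that scheduling each item at its release time is without loss of optimality, since delaying a job only forfeits more harvest) or find another way to make the collected harvest depend on the identity of the selected items. The rest of your plan --- the cardinality pinning via the $M$-digit, the prefix-feasibility check, and the two directions of the equivalence --- goes through as you describe once this is fixed.
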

\begin{proof}
    A reduction is given from $\ksum$. An instance of this problem consists of a set of positive integers $\mathcal{A} = \lrc{\alpha_1, \ldots, \alpha_{n}}$, a target value $0<\beta < S = \sum_{i = 1}^{n} \alpha_i$, and an integer $k < n$.
    The objective is to decide whether there exists a subset $A \subseteq \mathcal{A}$ such that $\abs{A} = k$ and $\sum_{\alpha \in A} \alpha = \beta$.
    To simplify notation, we assume that $\mathcal{A}$ is sorted in non-increasing order ($\alpha_1 \geq \ldots \geq \alpha_{n}$)
    and assume without loss of generality that $S>2$ and $n>2$. Construct a corresponding instance of \eas{} with the following:
    \begin{itemize}
        \item
        Time slots $\irange{1}{2n-k+2}$ and a threshold value of $n$ jobs
        \item
        $\mathcal{J} = \lrc{J_1, \ldots, J_{n}}$ where each job $J_i \in \mathcal{J}$ has release time, due date, and energy requirement given by $r_i = i+1, d_i = 2n - k + 2,$ and $e_i = S^2 n^2 + \alpha_i\lrp{S n}$, respectively
        \item An energy harvesting profile defined by
        \[
        h_t = \begin{cases}
                    k \lrp{S^2 n^2} + \beta\lrp{S n} & \text{if } t = 1\\
                    S - \alpha_{t-1} & \text{if } t\in \irange{2}{n+1}\\
                    (n-k)(S^2 n^2) + (S-\beta)\lrp{S n} - S(n-k-1) - \beta & \text{if } t = n + 2\\
                    0 & \text{otherwise}
                \end{cases}
        \]
    \end{itemize}

    We claim that any feasible schedule that schedules all $n$ jobs must follow the structure shown in \cref{fig:ksum_reduction_arb_rel}.
    Notably, such a schedule must assign exactly $k$ jobs to time slots in $\tau_1 \coloneqq  \irange{2}{n+1}$ and the remaining $n-k$ jobs to time slots in $\tau_2 \coloneqq \irange{n+3}{2n - k +2}$.
    To see this, note that energy must always be harvested at the first time slot since no jobs are yet released.
    Moreover, the total amount of energy that can be harvested on $\tau_1$ is $\sum_{t=2}^{n+1} \lrp{S - \alpha_{t-1}} = S(n-1)$.
    Thus, the total amount of energy that can be harvested on slots $\irange{1}{n+1}$ is
    \begin{equation*}
        k\lrp{S^2 n^2} + \beta\lrp{Sn} + S(n-1) < k \lrp{S^2 n^2} + Sn(S + 1) < (k+1)(S^2 n^2),
    \end{equation*}
    which is strictly less than the energy required to execute more than $k$ jobs.
    This implies that no more than $k$ jobs can be scheduled in $\tau_1$, and that energy must be harvested at time slot $n+2$ in order to schedule more jobs.
    This leaves exactly $n-k$ time slots in $\tau_2$ until the due date to schedule the remaining $n-k$ jobs.
    Additionally, we note that by the construction of the energy harvesting profile, it is optimal to schedule jobs in $\tau_1$ immediately at their release time as the later a job is scheduled, the more energy is lost from not harvesting energy at that time slot.
    We proceed with the assumption that jobs in $\tau_1$ are scheduled in this way.

    \begin{figure}[ht]
        \centering
        \includegraphics[scale=1]{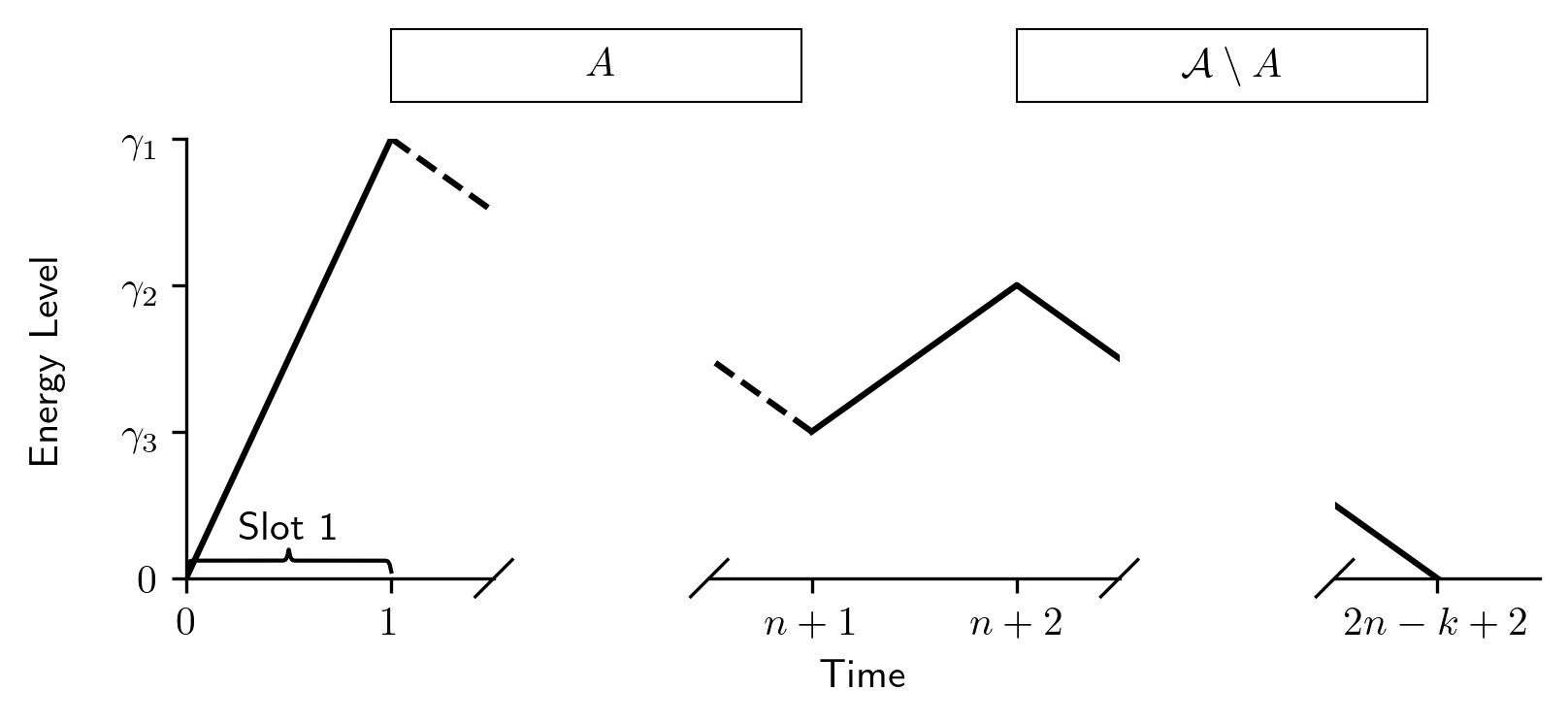}
        \caption{Feasible schedule structure associated with the reduction instance of \eas{} when jobs have arbitrary release times and identical due dates.}
        \label{fig:ksum_reduction_arb_rel}
    \end{figure}

    We now show that the total energy requirement of the $k$ jobs scheduled in $\tau_1$ is exactly $\gamma_1 =k \lrp{S^2 n^2} + \beta\lrp{S n}$.
    Assume for the sake of contradiction that this is not the case; that is, the total energy requirement of the $k$ jobs is $R = k\lrp{S^2n^2} + \beta^{\prime}\lrp{Sn}$ where $\beta^{\prime} \neq \beta$.
    If $\beta^{\prime} > \beta$, then by integrality it must be that $\beta^{\prime} \geq \beta + 1$, which implies $R \geq \gamma_1 + Sn$. However, as shown before, the total amount of energy
    harvestable on $\tau_1$ is $S(n-1)$, so it is not possible to feasibly schedule the $k$ jobs on $\tau_1$. Now consider the case when $\beta^{\prime} < \beta$.
    The energy harvested on $\tau_1$ is $S(n-k) - (S - \beta^{\prime})$ and $\lrp{\beta - \beta^{\prime}}\lrp{Sn}$ energy is leftover, so at the start of time slot $n+2$,
    there is at most $\gamma_2 = \lrp{\beta - \beta^{\prime}}\lrp{S n}+S(n-k-1)+ \beta^{\prime}$ energy. Since we must harvest energy at slot $n+2$, we get that at the start of slot $n+3$, the energy available is
    \begin{align*}
        \gamma_3 &= (n-k)(S^2 n^2) + (S-\beta)\lrp{S n} - S(n-k-1) - \beta + \gamma_2\\
                 &= (n-k)(S^2 n^2) + (S-\beta^{\prime})\lrp{Sn} - \beta+\beta^{\prime}.
    \end{align*}
    However, the energy requirement of the remaining $n-k$ jobs is $(n-k)(S^2 n^2) + (S-\beta^{\prime})\lrp{Sn}$, so it is not possible to feasibly schedule the
    remaining jobs in $\tau_2$. Hence, there is a contradiction so $\beta^{\prime} = \beta$ and $R = \gamma_1$.

    Therefore, a ``yes'' instance of this problem implies that the corresponding $\ksum$ instance is also a ``yes'' instance since the first $k$ jobs scheduled correspond to $k$ positive integers that sum to $\beta$.
    Conversely, a ``yes'' instance of $\ksum$ implies that the corresponding \eas{} instance is also a ``yes'' instance since we can schedule the jobs that correspond to the elements
    of $A$ at time slots in $\tau_1$ at their release time and the remaining $n-k$ jobs at time slots in $\tau_2$.
    Thus, \eas{} when jobs have arbitrary release times and identical due dates is $\NPH$.
\end{proof}

A similar type of reduction from \ksum{} can also be used to show the hardness of \eas{} with identical release times
and arbitrary due dates.

\begin{theorem}
    \label{thm:nph_arb_due}
    \eas{} when jobs have identical release times and arbitrary due dates is (weakly) $\NPH$.
\end{theorem}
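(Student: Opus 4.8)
The plan is to give another reduction from \ksum, obtained by mirroring the construction in the proof of \cref{thm:nph_arb_release} but exchanging the roles of release times and due dates. Starting from a \ksum instance $\mathcal{A} = \lrc{\alpha_1, \ldots, \alpha_n}$ (kept sorted non-increasingly), target $\beta$, and size $k$, I would build an \eas instance on the slots $\irange{1}{2n-k+2}$ in which \emph{all} jobs share the common release time $r_i = 1$, while their due dates are now staggered, roughly $d_i = 2n-k+2-i$, so that a job's \emph{latest} admissible slot plays the role that $r_i = i+1$ (its earliest admissible slot) played before. The energy requirements would again be $e_i = S^2 n^2 + \alpha_i\lrp{Sn}$, and the harvesting profile would again consist of two ``bulk'' values together with a band of small values of the form $S - \alpha_{(\cdot)}$. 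The two bulk values would be tuned exactly as in the previous proof, so that the large harvest at the start of the horizon powers precisely $k$ jobs whose $\alpha$-values sum to $\beta$, and the second bulk harvest powers the remaining $n-k$ jobs.

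The heart of the argument is the analog of the structural claim and \cref{fig:ksum_reduction_arb_rel}: I would show that any feasible schedule of all $n$ jobs must split into an early window $\tau_1$ and a late window $\tau_2$ separated by a slot at which a large amount of energy is forced to be harvested, with exactly $k$ jobs placed in one window and $n-k$ in the other. In the previous proof the \emph{count} was forced by the energy budget and the release times forced each $J_i$ into a slot whose forgone harvest $S - \alpha_i$ encodes $\alpha_i$. Here I would instead use the staggered due dates to cap which jobs may occupy the restricted window, and re-prove a placement-optimality lemma stating that it is optimal to schedule each job as \emph{late} as its due date allows (the band being arranged so that later slots in $\tau_1$ harvest more). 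This reproduces the crucial job--slot pairing in which $J_i$ is matched to a slot forgoing harvest $S - \alpha_i$.

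Once the pairing and the window structure are in place, the energy accounting is essentially identical to the previous proof, and I would reuse it verbatim: assuming the total $\alpha$-requirement of the constrained window is $R = k\lrp{S^2 n^2} + \beta'\lrp{Sn}$ with $\beta' \neq \beta$ and deriving a contradiction in both the $\beta' > \beta$ and $\beta' < \beta$ cases from the exact values of the two bulk harvests. This forces $\beta' = \beta$, giving that a feasible schedule of all $n$ jobs exists if and only if $\mathcal{A}$ admits a size-$k$ subset summing to $\beta$, and hence that \eas with identical release times and arbitrary due dates is (weakly) $\NPH$.

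The hard part will be that, because energy accumulates strictly forward in time, the dual instance cannot simply be obtained by reflecting the time axis of the instance in \cref{thm:nph_arb_release} (such a reflection does not preserve feasibility). Consequently both the harvesting profile and the placement-optimality lemma must be rebuilt from scratch so that due dates reproduce exactly the accounting that release times previously provided. The delicate point is choosing the due-date thresholds so that they force the precise window counts $k$ and $n-k$ while still leaving the choice of the size-$k$ subset completely free, and verifying that ``schedule as late as allowed'' is genuinely the forced optimal placement under the rearranged band of small harvest values.
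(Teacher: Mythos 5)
Your high-level strategy (reduce from \ksum{}, split the horizon into two windows separated by a forced harvesting slot, pin each job to a specific slot whose forgone harvest encodes $\alpha_i$, then run the $\beta'\neq\beta$ accounting) is the same as the paper's, and you correctly identify the central obstacle: energy accumulates forward in time, so the instance from \cref{thm:nph_arb_release} cannot simply be time-reversed. But the proposal stops exactly where that obstacle has to be overcome, and the one concrete mechanism you do commit to does not work. You keep the band of values $S-\alpha_{(\cdot)}$ with ``later slots harvest more'' and claim this makes it optimal to schedule each job as late as its due date allows. That is backwards: if later slots harvest more, occupying a later slot forgoes \emph{more} energy, so the exchange argument pushes jobs \emph{earlier}, and with identical release times nothing then ties job $J_i$ to the particular slot carrying $S-\alpha_i$ (a due date only upper-bounds a job's slot, so any job with a late due date may sit anywhere in the early band). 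To make ``schedule at the due date'' the forced placement, the band must be \emph{decreasing} along the window that the due dates control, which is why the paper uses harvests $S+\alpha_{t-k-2}$ (non-increasing, since $\mathcal{A}$ is sorted non-increasingly) in the \emph{late} window $\tau_2=\irange{k+3}{n+k+2}$, with $d_i=k+2+i$, and makes the early window $\tau_1=\irange{2}{k+1}$ a block of exactly $k$ zero-harvest slots.

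The second, more substantive gap is the claim that the energy accounting ``is essentially identical'' and can be ``reused verbatim.'' In \cref{thm:nph_arb_release}, the case $\beta'<\beta$ is killed because the $n-k$ leftover jobs have total requirement $(n-k)(S^2n^2)+(S-\beta')(Sn)$, which \emph{grows} as $\beta'$ shrinks and outstrips the available energy by exactly $\beta-\beta'$. Once the band moves to the late window and its sign flips, that cancellation no longer falls out the same way, and one direction of the inequality ($\beta'\ge\beta$) is no longer enforced by the $n$ original jobs alone. The paper closes this with a genuinely new ingredient your plan lacks: an auxiliary job $J_{n+1}$ with energy requirement $S^3n^3$ and the unique latest due date, which can only be run after a final harvest of $S^3n^3-Sk-\beta$; its feasibility forces the total harvest collected at the free slots of $\tau_2$ to be at least $Sk+\beta$, i.e.\! $\beta'\ge\beta$, while the size of the first bulk harvest forces $\beta'\le\beta$. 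Without either this auxiliary job or a carefully re-derived (not ``verbatim'') accounting that verifies the per-slot energy constraints, the reduction only certifies $\beta'\le\beta$ and the ``yes''-instance correspondence fails. In short: the approach is salvageable and is the one the paper takes, but everything you label as ``delicate'' or ``to be rebuilt from scratch'' is precisely the content of the proof, and the details you do specify point in the wrong direction.
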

\begin{proof}
    Given an instance of $\ksum$ as defined in the proof of \cref{thm:nph_arb_release}, construct a corresponding instance of \eas{} with the following:
    \begin{itemize}
        \item
        Time slots $\irange{1}{n+k+4}$ and a threshold value of $n+1$ jobs
        \item
        $\mathcal{J} = \lrc{J_1, \ldots, J_{n}, J_{n+1}}$. Each job $J_i \in \mathcal{J} \sm \lrc{J_{n+1}}$ has a release time, due date, and energy requirement of $r_i = 2, d_i = k+2+i,$ and $e_i = S^2 n^2+\alpha_i\lrp{S n}$, respectively.
        Job $J_{n+1}$ has $r_{n+1} = 2, d_{n+1} = n+k+4,$ and $e_{n+1} = S^3 n^3$.
        \item
        An energy harvesting profile defined by
        \[
        h_t = \begin{cases}
                    k\lrp{S^2n^2} + \beta\lrp{Sn} & \text{if } t = 1\\
                    0 & \text{if } t \in \irange{2}{k+1}\\
                    (n-k)\lrp{S^2 n^2} + (S-\beta)\lrp{Sn} & \text{if } t = k+2\\
                    S + \alpha_{t-k-2} & \text{if } t \in \irange{k+3}{n+k+2}\\
                    \lrp{S^3n^3} - Sk - \beta & \text{if } t = n+k+3\\
                    0 & \text{if } t = n+k+4
                \end{cases}
        \]
    \end{itemize}

    We claim that any feasible schedule that schedules all $n+1$ jobs must schedule exactly $k$ jobs from $\mathcal{J} \setminus \lrc{J_{n+1}}$ within time slots in $\tau_1 \coloneqq \irange{2}{k+1}$,
    the remaining $n-k$ jobs in $\mathcal{J} \setminus \lrc{J_{n+1}}$ within time slots in $\tau_2 \coloneqq \irange{k+3}{n+k+2}$, and job $J_{n+1}$ at time slot $n+k+4$.
    Additionally, by the construction of the energy profile, it is optimal to schedule jobs in $\tau_2$ at their due date as the earlier a job is scheduled, the more energy is lost from not harvesting energy at that time slot.
    We proceed with the assumption that jobs in $\tau_2$ are scheduled in this way.

    We first show the claim for job $J_{n+1}$. Note that it cannot be scheduled in $\irange{1}{n+k+2}$. This is because the total amount of energy harvestable on these time slots is
    \begin{equation*}
        n(S^2n^2) + S(Sn) + S(n+1) = S^2n(n^2 + 1) + S(n+1) < S^3n^3 + S(n+1).
    \end{equation*}
    If $J_{n+1}$ is scheduled on any of these slots, then the remaining energy is at most $S(n+1)$, which is strictly less than the energy required to
    schedule any other job. Hence, scheduling $J_{n+1}$ this early makes it impossible to schedule the remaining $n$ jobs. Additionally, note that $J_{n+1}$
    is the only job with a due date greater than $n+k+3$, so nothing is lost by harvesting energy at this slot and scheduling $J_{n+1}$ at slot $n+k+4$.

    We now show the claims for $\tau_1$ and $\tau_2$. We first note that jobs cannot be scheduled at time slots $1$ and $k+2$. Trivially, energy must be harvested at slot $1$ as no job is released yet. Now consider if a
    job was scheduled at slot $k+2$. Then the total energy harvestable on slots $\irange{1}{n+k+2}$ is at most $k(S^2n^2) + \beta{Sn} + S(n+1)$, which is strictly less
    than the energy required to schedule all the $n$ jobs in $\mathcal{J}\setminus\lrc{J_{n+1}}$ by their due dates. Thus, energy must be harvested at slot $k+2$.
    Now consider $\tau_1$. Clearly, no more than $k$ jobs can be scheduled on it. Assume for the sake of contradiction that $k^{\prime} < k$ jobs are scheduled on it instead.
    Because of the due date constraints, this requires $n-k^{\prime}$ jobs to be scheduled in $\tau_2$. In total, the energy requirement of the $n$ jobs scheduled in $\tau_1$ and $\tau_2$ is
    $n(S^2n^2) + S(Sn)$, which is exactly the amount harvested at time slots $1$ and $k+2$.
    Thus, at the start of time slot $n+k+3$, the only energy available is from the energy harvested at the open time slots of $\irange{k+3}{n+k+2}$, which is at most
    $Sk^{\prime} + S$. Since energy is harvested at slot $n+k+3$, the resulting energy at the start of slot $n+k+4$ is strictly less than
    $S^3n^3 - S(k - k^{\prime} - 1) - \beta < S^3n^3$. Hence, it is not possible to schedule job $J_{n+1}$, which is a contradiction. Therefore, exactly $k$ jobs in $\mathcal{J} \setminus \lrc{J_{n+1}}$ must be scheduled in $\tau_1$
    and the remaining $n-k$ jobs of $\mathcal{J} \setminus \lrc{J_{n+1}}$ in $\tau_2$.

    Next, we show that the energy requirement of the $k$ jobs scheduled in $\tau_1$ is exactly $\gamma_1 =k \lrp{S^2 n^2} + \beta\lrp{S n}$. Clearly, we do not have enough energy to schedule $k$ jobs that require more than $\gamma_1$ energy.
    Assume for the sake of contradiction that the energy requirement of these jobs is $k \lrp{S^2 n^2} + \beta^{\prime}\lrp{S n}$ where $\beta^{\prime} < \beta$.
    It follows that the energy harvested in $\tau_2$ is $Sk + \beta^{\prime}$ and thus the energy available at the start of time slot $n+k+4$ is $\gamma_2 = S^3n^3 - \beta + \beta^{\prime}$. To schedule $J_{n+1}$, we require
    $\gamma_2 \geq S^3n^3$, which gives $\beta^{\prime} \geq \beta$. Hence, there is a contradiction, so it must be that $\beta^{\prime} = \beta$.

    Therefore, a ``yes'' instance of this problem implies that the corresponding $\ksum$ instance is also a ``yes'' instance since the first $k$ jobs scheduled correspond to $k$ positive integers that sum to $\beta$.
    Conversely, a ``yes'' instance of $\ksum$ implies that the corresponding \eas{} instance is also a ``yes'' instance since we can schedule the $k$ jobs that correspond to the elements
    of $A$ at time slots in $\tau_1$, the $n-k$ jobs that correspond to elements of $\mathcal{A}\setminus{A}$ at their due dates in $\tau_2$, and $J_{n+1}$ at slot $n+k+4$.
    Thus, \eas{} when jobs have identical release times and arbitrary due dates is $\NPH$.
\end{proof}

Since jobs having arbitrary release times and identical due dates (or identical release times and arbitrary due dates) is a special case of them having both arbitrary release times and due dates,
we get the following as an immediate consequence of \cref{thm:nph_arb_release} (or \cref{thm:nph_arb_due}).

\begin{theorem}
    \eas{} when jobs have arbitrary release times and due dates is (weakly) $\NPH$.
    \label{cor:nph_arb}
\end{theorem}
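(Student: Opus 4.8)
The plan is to derive this theorem directly from \cref{thm:nph_arb_release} (equivalently, from \cref{thm:nph_arb_due}) by a containment argument, with no new reduction required. First I would observe that every \eas{} instance produced by the reduction in the proof of \cref{thm:nph_arb_release} is already a legal instance of the fully general problem: assigning all jobs the common due date $d_i = 2n-k+2$ is merely one admissible choice of due dates, and the construction satisfies $r_i \le d_i$ for every $i$ (since $r_i = i+1 \le n+1 < n+3 \le 2n-k+2 = d_i$). Thus nothing about those instances violates the constraints of the arbitrary-$r_i$, arbitrary-$d_i$ variant.

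Consequently, the very same polynomial-time reduction from \ksum{} that establishes \cref{thm:nph_arb_release} serves, without modification, as a polynomial-time reduction from \ksum{} to \eas{} with arbitrary release times and due dates: a ``yes'' instance maps to a ``yes'' instance and a ``no'' instance to a ``no'' instance exactly as argued there, because the target problem imposes strictly fewer structural restrictions. Since \ksum{} is (weakly) $\NPH$, this shows the general variant is (weakly) $\NPH$ as well. The weakness of the hardness is inherited verbatim: the energy values in the reduction are bounded by a polynomial in the numerical magnitudes of the \ksum{} instance, so no strong-hardness amplification is available or claimed.

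There is essentially no technical obstacle here; the entire content is the standard observation that a problem cannot become easier when a restriction on its inputs is lifted, so the unrestricted variant is at least as hard as any of its restricted specializations. The only point meriting a line of verification is that the earlier reduction instances are syntactically valid inputs to the general problem, which reduces to checking $r_i \le d_i$, a property the construction in \cref{thm:nph_arb_release} (and symmetrically that in \cref{thm:nph_arb_due}) already satisfies.
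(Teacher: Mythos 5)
Your proposal is correct and matches the paper's own argument: the paper likewise obtains this theorem as an immediate consequence of \cref{thm:nph_arb_release} (or \cref{thm:nph_arb_due}), noting that the restricted variants are special cases of the general one. Your additional check that the constructed instances satisfy $r_i \le d_i$ is a harmless extra verification but does not change the approach.
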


\subsection{Weighted setting}
\label{subsec:nph_weighted}
In \weas{} it can be shown that the problem is $\NPH$ even when all the jobs have identical release time and due dates
through a straightforward reduction from \knapsack{}.

\begin{theorem}
    \weas{} when jobs have identical release times and due dates is (weakly) $\NPH$.
    \label{thm:nph_weighted}
\end{theorem}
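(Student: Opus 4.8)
The plan is to reduce from \knapsack{}, where we are given $n$ items with positive integer sizes $s_1,\ldots,s_n$, integer values $v_1,\ldots,v_n$, a capacity $C$, and a target $V$, and must decide whether some subset has total size at most $C$ and total value at least $V$. I would build a \weas{} instance in which the only opportunity to harvest energy is a single large ``charge'' at the very first slot, so that the energy budget behaves exactly like a knapsack capacity and the scheduling feasibility constraint collapses to the knapsack inequality.

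Concretely, I would create $n$ jobs, giving job $J_i$ energy requirement $e_i = s_i$ and weight $w_i = v_i$, and assign all of them the common release time $r_i = 2$ and common due date $d_i = n+2$ (so $T = n+2$, which is an instance of the identical-$r$-and-$d$ setting). The harvesting profile is $h_1 = C$ and $h_t = 0$ for all $t \ge 2$. Since no job is released before slot $2$, slot $1$ is necessarily idle and harvests the full amount $C$; after that, no further energy ever enters the system. The decision version then asks whether a feasible schedule of weight at least $V$ exists.

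The crux is the observation that, in this instance, a subset $J' \sse \calJ$ can be scheduled feasibly if and only if $\sum_{J_i \in J'} e_i \le C$. For one direction, because $h_t = 0$ for $t \ge 2$, the energy available immediately before the $k$-th executed job equals exactly $C$ minus the total energy requirement of the $k-1$ jobs executed earlier; as all $e_i$ are positive, every prefix sum is bounded by the full sum $\sum_{J_i \in J'} e_i$, so if the full sum is at most $C$ then every feasibility inequality holds, in \emph{any} order and on any set of distinct slots in $\irange{2}{n+2}$ (there are $n+1 \ge \abs{J'}$ of them). Conversely, the feasibility inequality for the last executed job already forces $\sum_{J_i \in J'} e_i \le C$. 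One may alternatively invoke \cref{claim:nondec} to fix a canonical non-decreasing order, but positivity of the $e_i$ makes the order irrelevant here. Given this equivalence, a feasible schedule of weight at least $V$ corresponds exactly to a subset of items of total size at most $C$ and total value at least $V$, yielding the reduction in both directions.

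I expect the only subtle point to be verifying that the ``harvest-only-while-idle'' rule cannot be exploited to manufacture extra energy or to circumvent the capacity bound; this is precisely why I place all harvesting at the forced-idle slot $1$ and zero it out everywhere else, which pins the total available energy to exactly $C$ and reduces the scheduling constraint to a single knapsack inequality. Since the construction is linear in the input size and \knapsack{} is weakly $\NPH$, the reduction establishes that \weas{} with identical release times and due dates is weakly $\NPH$.
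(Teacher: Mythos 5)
Your reduction is essentially identical to the paper's: both reduce from \knapsack{} by setting $e_i = $ item size, $w_i = $ item value, a common release time of $2$ and common due date, with all harvestable energy concentrated at the forced-idle first slot ($h_1$ equal to the capacity, $h_t = 0$ otherwise). Your writeup simply spells out the feasibility-equals-capacity argument that the paper leaves as ``clearly.''
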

\begin{proof}
    A reduction is given from \knapsack{}. An instance of this problem consists of a set of $n$ items where item $i$ has size $b_i$ and value $v_i$, a capacity $B$, and a value $V$.
    The objective is to decide whether there exists a subset $S \subseteq \lrc{1, \ldots, n}$ such that
    $\sum_{i \in S} b_i \leq B$ and $\sum_{i \in S} v_i \geq V$. Given an instance of \knapsack{}, we construct a corresponding instance
    of \weas{} with the following:
    \begin{itemize}
        \item Time slots $\irange{1}{n + 1}$ and a threshold value of $V$
        \item $\mathcal{J} = \lrc{J_1, \ldots, J_n}$, where each job $J_i \in \mathcal{J}$ has a release time, due date, energy requirement, and weight of $r_i = 2, d_i = n + 1, e_i = b_i$, and $w_i = v_i$, respectively
        \item An energy harvesting profile described by
        \[
            h_t = \begin{cases}
                        B & \text{if } t = 1\\
                        0 & \text{otherwise}
                    \end{cases}
        \]
    \end{itemize}

    Clearly, there is a ``yes'' instance of \knapsack{} if and only if the corresponding \weas{} instance is also a ``yes'' instance.
    Thus, \weas{} with identical release times and due dates is $\NPH$.
\end{proof}

Since the case of jobs having identical release times and due dates is a special case of either one or both of them being arbitrary, we get the following as an immediate consequence.

\begin{theorem}
    \weas{} is (weakly) $\NPH$.
    \label{cor:nph_weighted_arb}
\end{theorem}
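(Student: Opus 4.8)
The plan is to derive Theorem~\ref{cor:nph_weighted_arb} as an immediate corollary of Theorem~\ref{thm:nph_weighted} via a trivial containment argument, exploiting that the identical-release-time-and-due-date case is a \emph{restriction} of the general problem rather than a genuinely new one. Concretely, I would first observe that every instance produced by the reduction in the proof of Theorem~\ref{thm:nph_weighted} --- in which all jobs satisfy $r_i = 2$ and $d_i = n+1$ --- is already a syntactically valid instance of general \weas{}, since the general problem merely \emph{permits} arbitrary release times and due dates and therefore in particular permits identical ones. The feasibility condition and the throughput objective are defined by the same formulas on such instances, so no translation of the instance is required at all.

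Next I would phrase this as a formal reduction. Any procedure that decides general \weas{} also decides the restriction to identical $r_i$ and $d_i$, because the restricted instances form a subset of the general instances on which the two problems agree. Hence the restricted problem reduces to the general one under the identity map on instances. Composing this identity reduction with the reduction from \knapsack{} supplied by Theorem~\ref{thm:nph_weighted} yields a valid reduction from \knapsack{} to general \weas{}, establishing that the latter is $\NPH$. There is essentially no obstacle in this step; the only point requiring a moment's care is confirming that the restricted instances literally meet the definition of a general \weas{} instance, which is immediate from the construction.

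Finally I would check that the \emph{weak} qualifier survives the composition. The reduction of Theorem~\ref{thm:nph_weighted} uses only the numeric quantities $B$, $b_i$, $v_i$, and $V$ drawn from the source \knapsack{} instance, and viewing the output as a general \weas{} instance introduces no additional large magnitudes. Thus the composed reduction remains polynomial in the unary-encoded size of the \knapsack{} instance, so weak (pseudo-polynomial) hardness carries over without change, completing the argument.
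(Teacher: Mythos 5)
Your proposal is correct and matches the paper's argument exactly: the paper also obtains this theorem as an immediate consequence of Theorem~\ref{thm:nph_weighted}, noting that identical release times and due dates form a special case of the general problem. Your additional remarks on the identity reduction and the preservation of weak hardness are just a more explicit spelling-out of the same one-line containment argument.
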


\section{Conclusions and Open Problems}
\label{sec:conclusion}
We conclude with a brief summary of our results and open problems.
We presented three algorithms: (1) an optimal polynomial time algorithm for \eas{} with identical release times and due dates (\cref{sec:optimalA}),
(2) a greedy $\frac{1}{2}$-approximation algorithm for \eas{} with arbitrary release times and due dates (\cref{sec:greedy}), and (3) an FPTAS for \weas{} in the case of identical release times and due dates (\cref{sec:weighted_fptas}).

It would be interesting to see if there exists a PTAS or a better constant factor approximation for \eas{} with arbitrary release times and due dates, or if special cases
of \eas{} where only one of them is arbitrary admit better approximation ratios. Another natural direction
to consider is the extension of the greedy approach to \weas{} with arbitrary release times and due dates.

In \cref{sec:nph}, we study the hardness of both \eas{} and \weas{} and give nontrivial reductions from the $\ksum$ problem
to show that except for the case of identical release times and due dates, \eas{} is weakly $\NPH$ (\cref{subsec:nph_unweighted}). 
It is open whether \eas{} admits an FPTAS or whether there is a reduction from a strongly $\NPH$ problem to \eas{}.

One could also consider expanding our model. A natural extension is to consider the case of arbitrary processing times for jobs.
Another is to consider online versions of our problems. This may include either considering an online energy harvesting profile as considered in~\cite{IN2020},
or considering a model in which both jobs and the harvesting profile are revealed in an online manner.

\bibliographystyle{plain}
\bibliography{main}

\end{document}